\newtheorem{thm}{Theorem}
\newtheorem{lem}{Lemma} 
\newtheorem{prop}{Proposition} 
\newtheorem{claim}{Claim}
\newcommand{\Gp}[2]{G^+(#1,#2)}
\newcommand{\Gq}[2]{G^-(#1,#2)}
\newcommand{\Q}{Q}
\newcommand{\SO}{f(S)}
\newcommand{\SpO}{f(S')}
\newcommand\extR{\overline{\mathbb{R}}}
\newcommand\extlt{\overline{<}}
\newcommand\extSigma{\overline{\Sigma}}
\newcommand\med{\textsc{med}}
\newcommand\tauprod{\tau^N}
\def\@xfootnote[#1]{%
  \protected@xdef\@thefnmark{#1}%
  \@footnotemark\@footnotetext}
\begin{document}

\title{The Limits of Tolerance} 
\author{Alan D. Miller\thanks{Faculty of Law, Western University, 1151 Richmond Street, London, Ontario N6A 3K7, Canada. Email: alan.miller@uwo.ca Web: http://alandmiller.com. This research was undertaken, in part, thanks to funding from the Canada Research Chairs program. Helpful comments were provided by Itai Arieli, Veronica Block, Chris Chambers, Hulya Eraslan, Nicholas Gravel, Matthew Kovach, Cesar Martinelli, Elisa Miller,  Herve Moulin, Marcus Pivato, Ben Polak, Clemens Puppe, Jacob Schwartz, Matthew Spitzer, Bruno Strulovici, William Thomson, Jianrong Tian, John Weymark, Bill Zame, participants at the 32nd Annual Conference of the European Association of Law and Economics, the 29th Annual Meeting of the American Law and Economics Association, the 7th Israeli Game Theory Conference, the 2022 Meeting of the Society for Social Choice and Welfare, the GRASS XVII meeting, the 2024 SAET conference, the Theory and Experiment at McMaster Workshop, 2024 annual conference of the Israeli Law and Economics Association, the 2024 STILE workshop, the 2024 meeting of the Canadian Law and Economics Association, LET XIII, and seminars at ETH Z\"{u}rich, the University of Southern Denmark, Seoul National University, the University of Rochester,  the University of Toronto, the University of Arizona, Ben Gurion University, and Bar Ilan University.}}

\maketitle

\begin{abstract}
I propose a model of aggregation of intervals relevant to the study of legal standards of tolerance. Seven axioms: responsiveness, anonymity, continuity, strategyproofness, and three  variants of neutrality are then used to prove several important results about a new class of aggregation methods called endpoint rules. The class of endpoint rules includes extreme tolerance (allowing anything permitted by anyone) and a form of majoritarianism (the median rule).
\end{abstract}

\section{Introduction}

Common law legal systems often rely on community standards, a legal concept according to which actions are judged according to the standards of the society.
Community standards are ubiquitous in the common law. 
In the law of accidents, negligence is determined according to the standard of the reasonable person.\footnote{``We apply the standards which guide the great mass of mankind in determining what is proper conduct of an individual under all the circumstances and say that he was or was not justified in doing the act in question.'' \textit{Osborne v. Montgomery}, 203 Wis. 223, at 231 (1931). There is a considerable debate about the extent to which courts should and do follow this general rule; an alternative approach treats the reasonable person as a normative standard. For more, see \citet{mp:2012}.} 
In contract law, determinations of good faith are made by reference to ``community standards of decency, fairness, or reasonableness.''\footnote{Restatement (Second) of Contracts, \S 205 cmt. a. For more see \citet{mp:2012b}.} 
In the United States, speech that is obscene according to ``contemporary community standards'' may be criminalized notwithstanding constitutional protection of free speech.\footnote{\textit{Miller v. California}, 413 U.S. 15 (1973). See also \citet{am:2013}.} 
These standards are used in cases where absolute standards of behavior are undesirable because the standard is hard to define or expected to change over time.\footnote{Community standards are also found outside of the common law. For example, the Internal Revenue Code exempts from the prepaid interest rule points paid on the mortgage of a primary residence, provided that ``such payment of points is an established business practice in the area in which such indebtedness is incurred.'' 26  U.S.C. 461(g). The Statute of the International Court of Justice requires the court to apply, among other sources, ``international custom, as evidence of a general practice accepted as law'' when resolving international disputes. \textit{Statute of the International Court of Justice}, art. 38,  $\P$   1(b).}

I introduce a model of community standards in which standards are represented by intervals of the real line. 
A set of individual standards is aggregated to form a community standard. 
Normative axioms are introduced that represent principles important in legal decision-making. 
These axioms are used to characterize a new family of aggregation methods: the \textit{endpoint rules}.

To illustrate, consider the law of defamation, which gives people a right to sue for damages when someone else publishes speech that harms one's reputation in the eyes of the community \citep[see][]{mp:2013}. 
The person may claim, for example, that the statement attributed to her extreme political views that she does not in fact hold, and that the association with those views has harmed her reputation.\footnote{There are, of course, many political dimensions; nonetheless, the language of political left and political right suggests that a model with one dimension may have some explanatory power. We may also restrict attention to beliefs along a single dimension, such as views on relating to the use of military force.} 
Members of society each believe that respectable people are neither on the extreme left nor the extreme right, but they may have different beliefs about what is extreme in each direction. 
Alternatively, we may think of these cutoffs as defining an interval of reputable political beliefs. 
The court must determine whether the speech is defamatory, and this determination must be based on the beliefs of the members of the community.

Endpoint rules use these cutoffs---the points beyond which an action is deemed unreasonable, or a statement defamatory---to determine the limits of social tolerance. 
The formal structure of these rules will be described below, but the basic concept of these rules is that the bounds of tolerance are set by eliminating the most extreme cutoffs. 
In the context of defamation, this implies that a court will not merely look to see whether individuals consider the speech to be extreme, but rather, whether they agree it to be extreme for the same reasons. 
The rules eliminate the possibility that the statement harms the plaintiff's reputation because some members of the community consider it to be too far to the left, and others consider it to be too far to the right.

I now proceed to describe the model. 
The first element of the model is the set of alternatives, which is isomorphic to the real line. 
Next, there are agents; these represent the members of the community. 
Each agent has a non-empty set of allowable alternatives; this set is assumed to be bounded and convex.\footnote{To simplify results, I assume that judgments take the form of open sets.} 
In other words, judgments take the form of intervals of the real line. 
These individual judgments are then aggregated to form the community standard.

The focus of this paper is on the method through which these judgments are combined. 
A set of axioms is defined and used (in various combinations) to characterize the family of endpoint rules. 
These rules aggregate the lower and upper endpoints separately, in a way that guarantees that that aggregate will be an interval. 

Endpoint rules are parametrized by two positive integers, $p$ and $q$, such that the sum of the two parameters is not greater than one more than the number of agents. 
As each agent has a judgment that takes the form of an interval, we can define the \textbf{p,q}-th endpoint rule as the one that defines the aggregate set to be the interval defined by the $p$-th lowest lower endpoint and the $q$-th highest upper endpoint. 
The subclass of rules where $p=q$ are called ``symmetric'' endpoint rules. 
 
The  family of endpoint rules includes the ``maximal rule,'' in which $p=q=1$ \citep[see][]{am:2009,gtw:2017}, and the ``median rule,'' in which $p=q=\lfloor\frac{n+1}{2}\rfloor$ \citep[see][]{am:2009,vb:2010,fc:2011,gtw:2017}.
In practice, there may not be a well-defined median judgment.
Endpoint rules provide one answer to this problem: as in \citet{jl:2007}, the median rule is well-defined even though a median judgment may not exist.\footnote{\textcolor{black}{For example, consider three judgments: (1,6), (2,3), and (4,5). None of these judgments is a median, but the median rule is defined and gives us (2,5). Majority rule does not lead to a well-defined interval, as a majority supports the points in (2,3) and (4,5) but not those in [3,4].}}

Endpoint rules aggregate judgments according to their endpoints, and not in a pointwise manner.
This is important because pointwise aggregation (for example, according to majority rule) will not necessarily result in a well-defined interval and, consequently, may lead to incoherent outcomes. 
A further feature of these rules is that they aggregate the endpoints independently. 
That is, the aggregate left endpoint is a determined without reference to the individual right endpoints, and vice versa.

To return to our example, a court may ask a jury to determine whether a statement harms the reputation of the plaintiff in the eyes of the community. 
The jury may unanimously agree that the statement does in fact harm the plaintiff's reputation. But they may do so for different reasons; some may consider it too liberal, and others, too conservative. 
As such, it is possible (through unanimous or majority aggregation) that all statements would be judged defamatory. 
But society cannot function if all statements are considered to be defamatory. 
Under majority aggregation, we can reach an equally perverse result: it may be possible that calling someone a moderate is defamatory, but calling them a liberal or a conservative is not.

In practice, a court will only consider the case presented to it, and not all possible cases. 
As a consequence, an observer may not receive enough information to tell whether similar cases would have been decided in a coherent manner.
But the problem of incoherence still remains, and the decision would still be fundamentally arbitrary.

Endpoint rules solve this problem by separating the decision of whether an action is permissible into two questions: 
First, is there a `lesser' permissible action? 
Second, is there a `greater' permissible action?
That is, we first check to see whether enough people believe that there is a statement that could be made, referring to that individual as being further to the left, that would not be defamatory. 
They need not agree on what that statement would be, but only that each believes that such a statement would exist. 
We then check to see whether enough people (though not necessarily the same number as before) believe that a statement placing that individual further on the right would be non-defamatory. 
If the answer to both of these questions is yes, the claim of defamation would fail.

I characterize the class of endpoint rules using several axioms. 
The \textit{responsiveness} axiom requires the aggregation rule to respond to changes in the individual judgments. 
If the individual interval changes, and each new interval includes the prior one, then responsiveness requires the new aggregate interval to include the prior aggregate interval. 
The \textit{anonymity} axiom requires the aggregate to be independent of the names of the agents, so that the aggregate choice would not change were two agents to trade their standards between themselves.
In addition, I add a \textit{continuity} axiom.

To understand the fourth and fifth axioms,  recall that the model imposes a structure on the set of alternatives, in that the set of alternatives is isomorphic to the real line. 
These axioms are motivated by the idea that points on the political spectrum are not inherently special. 
The political center  is a function of individual beliefs and cannot be objectively defined; one may note that views on civil rights, same-sex marriage, capital punishment, gun control, and democracy have changed over time. 
These views also have no meaningful cardinal relationship. 

Nonetheless, the political spectrum has a natural structure. 
One important property of this structure is \textit{betweenness}; on the political spectrum, a centrist is objectively in between a liberal or a conservative \citep[see][for more on betweenness]{np:2002,np:2006}. 
A second property is \textit{direction}; in some contexts we may wish to treat the left differently from the right. 
Such a property may be relevant in other contexts, such as when trying to determine the reasonableness of highway driving speeds in negligence.\footnote{In the context of speeds, we may think of the endpoints as being drawn from $(0,\infty]$, to reflect the idea that one should not park in the middle of the highway. Note that, while speeds themselves have a cardinal relationship, the reasonableness of speeds may not.} 
Here, the directions of ``high'' and ``low'' are not interchangeable.

The \textit{weak neutrality} axiom requires the aggregation of individual judgments to be independent of transformations of the real line that preserve both betweenness and the direction. 
It does not require the aggregation to be independent of transformations that preserve betweenness only. 
This axiom implies that the cardinal properties of the real line should be disregarded. 
Weak neutrality is equivalent to the ordinal covariance axiom of \citet{cc:2007}.

\textit{Strong neutrality}, as its name implies, is stronger than weak neutrality. 
It requires the aggregation of individual judgments to be independent of any betweenness-preserving transformation of the real line. 
This axiom implies that betweenness is important, but that both the direction and cardinal properties of the real line should be disregarded. 
These axioms are motivated by the idea that the real numbers are merely labels, and as such the aggregation method should not be affected by relabelings that preserve the (relevant) structural properties of the real line.

Using these axioms, I prove two results. 
First, I show that the family of endpoint rules is characterized by the responsiveness, anonymity, continuity, and weak neutrality axioms. 
Second, replacing weak neutrality with strong neutrality yields a characterization of the symmetric endpoint rules.

\subsection{Strategyproofness}

The interval aggregation problem described above does not use the concept of preference. 
Individuals' judgments represent the individuals' beliefs about which actions are acceptable, and not their preferences over policy \citep[for more on this distinction, see][]{ks:1986}. 
The model takes the judgments as given, and does not ask where they come from. 

However, even though we may be opposed to strategic judgments in some contexts as a matter of principle, this does not mean that strategic judgments are never made. 
For this reason, one may wish to know the extent to which the endpoint rules are manipulable. 
To answer this question, I investigate the implications of a \textit{strategyproofness} assumption \citep[see][]{df:1961}.

To study the implications of strategyproofness it is necessary to restrict the class of allowable preferences \citep[see][]{ag:1973,ms:1975}, and it is known that, when choosing a single alternative from a single issue dimension, a voting rule can be strategyproof and non-dictatorial if preferences are single-peaked \citep[see][]{hm:1980}.\footnote{These preferences are distinct from the ``single-plateaued'' preferences studied in \citet{db:1998}, where the plateaus represent indifference between top-ranked objects.} 
\citet{vb:2010} and \citet{fc:2011} define a class of generalized single-peaked preferences according to which an interval is defined to be between two other intervals if its lower endpoint is between the lower endpoints of the other two, and if its upper endpoint is between the upper endpoints of the other two. 
Preferences are generalized single-peaked if there is (a) a unique interval that is preferred to all other intervals (called the ``peak'') and (b) any interval in between the peak and a third interval is necessarily preferred to the third interval \citep[see][]{np:2007}. 
 An aggregation rule is strategyproof if each individual prefers to truthfully reveal his or her peak interval. 
 \citet{vb:2010} and \citet{fc:2011} show that the median rule is strategyproof.

I provide a full characterization of anonymous and strategyproof interval aggregation rules. 
To prove this characterization I first show that the strategyproofness axiom implies that the aggregation rule must aggregate the endpoints independently.
As a consequence, the aggregation of lower endpoints is essentially equivalent to the aggregation of single peaked preferences on a single issue dimension as in \citet{hm:1980}. 
I then show that the results in that work can be used to characterize an a family of rules analogous to Moulin's famous ``phantom voters'' characterization. 

This family includes the endpoint rules as a special case.
I show that endpoint rules can be characterized by further adding a \textit{translation equivariance} axiom, which requires the aggregation rule to shift the aggregate interval by a constant when each individual interval shifts by that same constant. 
Translation equivariance is implied by both  of the neutrality axioms; in this sense it may be thought of as a very weak form of neutrality.

An implication of this axiom is that neutrality is not necessary to support the use of endpoint rules.
Translation equivariance is much weaker than neutrality; for example, it allows the rule to make use of the cardinality properties of the real line.
Consider averaging rules, which determine the endpoints through the use of averages  \citep[see][]{gtw:2017}. These rules are not neutral, but they are translation equivariant and, in some cases, anonymous.
But averaging rules are not endpoint rules, and are ruled out by the assumption of strategyproofness.

\subsection{Other literature}

There is a significant literature devoted to the study of opinion and judgment aggregation, starting with the pioneering works of \citet{ar:1951} and \citet{km:1952}.
The closest work in this literature is \citet{am:2013}, which differs in that standards in that work are arbitrary subsets of an unstructured set of alternatives, rather than intervals of the real line. 
The lack of an objective order leads to a near-impossibility result; \citet{am:2013} provides conditions under which an aggregate of individual standards will deem an action impermissible only when all individuals in the community consider it to be impermissible.\footnote{\citet{ac:2010} reaches a similar result in the context of  menu choice. Related results, using different sets of axioms, can also be derived from \citet{bm:1990} and \citet{np:2006}.}
As communities are generally understood to be large and diverse, nothing, in practice, would be forbidden. 
By contrast, the present work provides insight into how the structure of the real line can be exploited to construct more useful rules.

There are a number of papers that study interval aggregation from an axiomatic perspective. \citet{am:2009} introduces a model of interval aggregation, along with the median rule and the maximal rule, and shows that these two rules are, respectively, the least  most permissive rules that satisfy responsiveness, anonymity, strong neutrality, and ``homogeneity,'' an axiom that requires the aggregation rule to preserve complete agreement.\footnote{\citet{am:2009} is an earlier version of this paper.} As mentioned above, \citet{vb:2010} and \citet{fc:2011} introduce the class of generalized single-peaked preferences over intervals\footnote{For the concept of generalized single-peakedness, see \citet{np:2007}.} and show that the median rule is strategyproof when preferences are restricted to this class. More recently, \citet{ent:2022} examines this problem from the perspective of computer science and show that the only rules that consistently aggregate both endpoints and interval widths are those which use weighted averages,

Theorems \ref{mainthm}~and~\ref{thm2} are reminiscent of \citet{km:1952}, which characterizes majority rule using positive responsiveness,  anonymity, and neutrality. 
However, it should be noted that the responsiveness and neutrality axioms used in this paper are conceptually distinct from those used by May.
For generalizations of May's axioms in spatial environments see \citet{bc:2015,bc:2017}.

There is a conceptual link between the results in this paper and those of \citet{cc:2007}, which characterizes quantile representations using ordinal covariance and monotonicity. 
Ordinal covariance is essentially weak neutrality, while monotonicity is closely related to responsiveness. 
Endpoint rules can be thought of as a type of a quantile rule, where each endpoint is chosen according to a quantile. 
A contribution of the present work is that endpoint rules allow for independent and consistent aggregation of the two endpoints.

\citet{ss:2003} introduce the family of ``consent rules'' in the context of group identification.
While the models and rules are quite distinct, one can draw an analogy between consent rules and endpoint rules in that both families exist within a spectrum that places liberalism and democracy at its ends.\footnote{It is important to note that the spectrum is two-dimensional in both cases; a rule may be liberal in one direction and less liberal in the other.}
The maximal rule may be thought of as the most liberal rule in that any action considered reasonable by at least one person is permitted by the rule. 
The median rule, on the other hand, may be thought of as the most democratic; here, the boundaries of acceptability are majoritarian. 

The results involving strategyproofness are related to those in \citet{bsz:1991}, which studies a model in which individuals much choose a subset of objects from some finite set. 
In particular, they use anonymity, neutrality, strategyproofness, and voter sovereignty properties to characterize a mechanism known as ``voting by quota,'' in which objects are chosen if a quota is met; that is, if large enough group of individuals wants those objects to be chosen. 
The main difference with the results in \citet{bsz:1991} is that, in this work, the objects are ordered, and individuals must choose convex subsets. 
As a consequence, there is no need for all objects, or even any object, to be supported by a quota, as an object will be chosen as long as enough people agree  that (a) some lesser object that should be chosen and (b) some greater object that should be chosen as well. 
They need not agree on the identity of the lesser or greater object, but only that some such object exists.
The axioms are different; the neutrality axiom in this paper does not apply to all permutations, but instead preserves some of the features of the real line. 
In addition, the strategyproofness axiom relies on a different assumption regarding preferences.

The results involving strategyproofness are also related to those in \citet{bj:1983} and \citet{bms:1998}, which characterize strategyproof voting mechanisms when alternatives are subsets of some Euclidean space. 
The relationship becomes apparent when one considers that intervals may be described as those elements $(x,y)\in\mathbb{R}^2$ for which $x<y$. 
In this context, the preferences defined in those works imply the generalized single-peaked preferences studied in this paper.

\section{Endpoint Rules}

Let $N\equiv\{1,\hdots{},n\}$ be a finite set of agents, and let $\Sigma$ be the set of bounded open non-empty intervals of the real line. 
I study aggregation functions $f : \Sigma^N \rightarrow \Sigma$, which map a set of $n$ intervals into a single interval. 
The first two axioms, responsiveness and anonymity, are standard and are described in the introduction.

\begin{description} \item [Responsiveness:] For all $S,T\in\Sigma^N$, if $S_i \subseteq T_i$ for all $i\in N$, then $f(S) \subseteq f(T)$. \end{description} 

Let $\pi$ denote a permutation of $N$, and define $\pi S = \left(S_{\pi(1)},\hdots,S_{\pi(n)}\right)$.

\begin{description} \item [Anonymity:] For every  $\pi$ of $N$ and $S\in\Sigma^N$, $f(S) = f(\pi S)$. \end{description}

I introduce a basic continuity axiom.\footnote{I thank Jianrong Tian for pointing out a problem with the definition of continuity used in an earlier version of this paper.} 
For $x,x'',y,y''$ in $\mathbb{R}$, $x<x''<y<y''$, let $W(x,x'',y,y'')\subset\Sigma$ such that $(x',y')\in W(x,x'',y,y'')$ if $x < x' < x''$ and $y<y'<y''$.
Let $(\Sigma,\tau)$ be the topological space with the base $\{W(x,x'',y,y''):x<x''<y<y''\}$.
Let $(\Sigma^N,\tauprod )$ be the product topology.

\begin{description} \item[Continuity:] For all $Y\in\tau$, $f^{-1}(Y)\in\tauprod $.\end{description}

I provide two distinct neutrality axioms. Let $\Phi$ be the set of all strictly monotone transformations of the real line, and let $\Phi^+$ be the set of all strictly increasing monotone transformations of the real line. That is, transformations in $\Phi$ must preserve betweenness, while transformations in $\Phi^+$ must additionally preserve the direction. In neither set, however, are transformations required to preserve the cardinal properties of the real line. For $S_i\in\Sigma$ and $\phi\in\Phi$, define $\phi(S_i)\equiv\cup_{x\in S_i} \phi(x)$.

\begin{description} \item [Weak Neutrality:] For every $\phi\in\Phi^+$ and $S\in\Sigma^N$, $\phi(f(S))=f(\phi(S_1),\hdots,\phi(S_n))$. \end{description} 
\begin{description} \item [Strong Neutrality:] For every $\phi\in\Phi$ and $S\in\Sigma^N$, $\phi(f(S))=f(\phi(S_1),\hdots,\phi(S_n))$.  \end{description}

I introduce a class of aggregation rules, called endpoint rules, which I believe have not yet been described in the literature. 
This class of rules is parameterized by two positive integers, $p$ and $q$, such that $p+q\leq n+1$. 
For a profile of standards $S$ and a point $x\in\mathbb{R}$, let $\Gp{S}{x} = \{ i \in N: (-\infty,x]\cap {S}_i \neq\varnothing \}$ be the set of individuals who have $x$, or a point below $x$, in their interval, and let $\Gq{S}{x} = \{ i \in N: [x,+\infty)\cap {S}_i \neq\varnothing \}$ who have $x$, or a point above $x$, in their interval.
An endpoint rule is of the form $$f^{p,q}(S) \equiv \{ x : |\Gp{S}{x}| \geq p \textnormal{ and } |\Gq{S}{x}| \geq q \}.$$ This rule takes the open interval defined by the $p$-th lowest lower endpoint and the $q$-th highest upper endpoint. The restriction  $p+q\leq n+1$ guarantees that the lower endpoint will be to the left of the upper endpoint, and therefore that $f^{p,q}$ is well-defined. 
To see this, note that there are are at least $n-p+1$ lower endpoints greater or equal to the $p$-th highest lower endpoint, and that this implies that there are at least $n-p+1$ upper endpoints greater than the $p$-th highest lower endpoint. 
Consequently the rule is well-defined whenever $q \leq n-p+1$, or when $p + q \leq n + 1$.\footnote{The case $p=q=n$ violates the restriction  $p+q\leq n+1$ for $n>1$. This would result in a interval only when the intersection of all individual intervals is non-empty and is hence is not a well-defined rule. Of course, a rule that coincides with $p=q=n$ when this intersection is non-empty can be constructed if, for example, we were to drop the responsiveness axiom.}

An example with three individuals is depicted in Figure \ref{S}, with standards $S_1=(2,4)$, $S_2=(3,6)$, and $S_3=(1,5)$. Here,  $f^{1,1}(S) = (1,6)$ (Figure \ref{f11S}), $f^{1,3}(S) = (1,4)$ (Figure \ref{f13S}), and  $f^{2,2}(S) = (2,5)$ (Figure \ref{f22S}).

\begin{figure}[htb] 
\centering
\subfigure[$S_1=(2,4), \: S_2=(3,6), \: S_3=(1,5)$\label{S}]{
\begin{tikzpicture}[scale=1]
\useasboundingbox (0,1) grid (7,5);
\draw[ultra thick] (1,3) -- (5,3);
\draw[ultra thick] (3,3.5) -- (6,3.5);
\draw[ultra thick] (2,4) -- (4,4);

\draw[<->] (0.5,2.375) -- (6.5,2.375);

\draw (1,2.5)--(1,2.25);
\draw (2,2.5)--(2,2.25);
\draw (3,2.5)--(3,2.25);
\draw (4,2.5)--(4,2.25);
\draw (5,2.5)--(5,2.25);
\draw (6,2.5)--(6,2.25);

\draw (1,2.35) node[anchor=north] {\footnotesize 1};
\draw (2,2.35) node[anchor=north] {\footnotesize 2};
\draw (3,2.35) node[anchor=north] {\footnotesize 3};
\draw (4,2.35) node[anchor=north] {\footnotesize 4};
\draw (5,2.35) node[anchor=north] {\footnotesize 5};
\draw (6,2.35) node[anchor=north] {\footnotesize 6};

\end{tikzpicture}}
\subfigure[$f^{1,1}(S)=(1,6)$\label{f11S}]{
\begin{tikzpicture}[scale=1]
\useasboundingbox (0,1) grid (7,5);
\draw[ultra thick] (1,3) -- (5,3);
\draw[ultra thick] (3,3.5) -- (6,3.5);
\draw[ultra thick] (2,4) -- (4,4);

\draw[<->] (0.5,2.375) -- (6.5,2.375);

\draw[ultra thick] (1,1.75) -- (6,1.75);
\draw[dotted] (1,1.75) -- (1,3);
\draw[dotted] (6,1.75) -- (6,3.5);
\end{tikzpicture}}
\subfigure[$f^{1,3}(S)=(1,4)$\label{f13S}]{
\begin{tikzpicture}[scale=1]
\useasboundingbox (0,1) grid (7,5);
\draw[ultra thick] (1,3) -- (5,3);
\draw[ultra thick] (3,3.5) -- (6,3.5);
\draw[ultra thick] (2,4) -- (4,4);

\draw[<->] (0.5,2.375) -- (6.5,2.375);

\draw[ultra thick] (1,1.75) -- (4,1.75);
\draw[dotted] (1,1.75) -- (1,3);
\draw[dotted] (4,1.75) -- (4,4);
\end{tikzpicture}}
\subfigure[$f^{2,2}(S)=(2,5)$\label{f22S}]{
\begin{tikzpicture}[scale=1]
\useasboundingbox (0,1) grid (7,5);
\draw[ultra thick] (1,3) -- (5,3);
\draw[ultra thick] (3,3.5) -- (6,3.5);
\draw[ultra thick] (2,4) -- (4,4);

\draw[<->] (0.5,2.375) -- (6.5,2.375);

\draw[ultra thick] (2,1.75) -- (5,1.75);
\draw[dotted] (2,1.75) -- (2,4);
\draw[dotted] (5,1.75) -- (5,3);

\end{tikzpicture}}
\caption{Endpoint Rules\label{endpointrules}}
\end{figure}
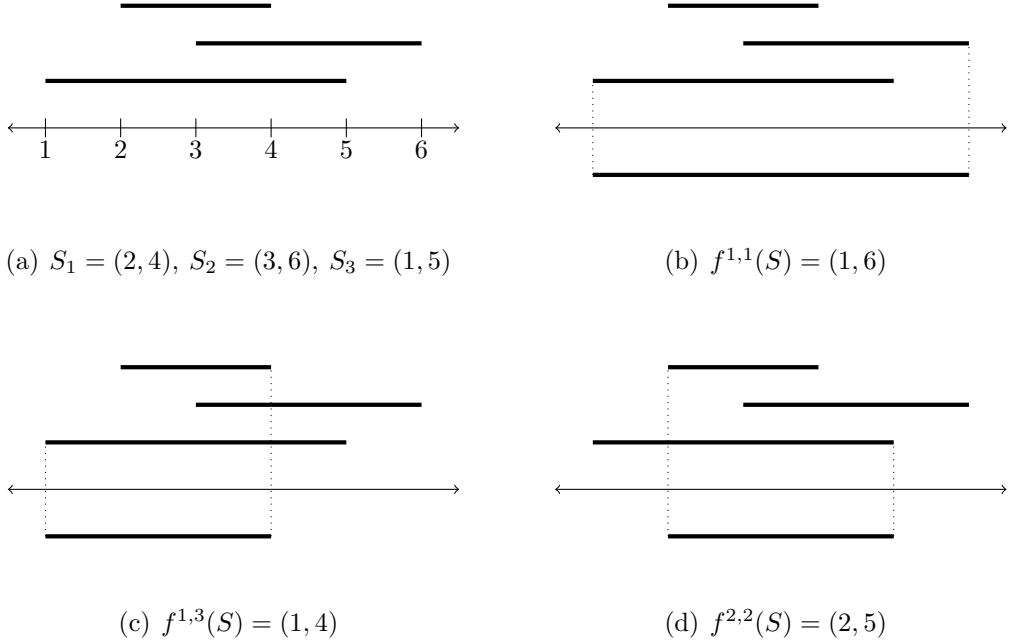

An important subclass of rules is that of the symmetric endpoint rules, where $p = q$. If we define $m$ to be the maximal integer less or equal to $(n+1)/2$, then $f^{m,m}(S)$ is the median rule.

I present two results. The first theorem is a characterization of the endpoint rules.

\begin{thm} \label{mainthm} An aggregation rule $f$ satisfies responsiveness, anonymity, continuity, and weak neutrality if and only if it is an endpoint rule. \end{thm}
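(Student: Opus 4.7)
The plan is to handle sufficiency by routine verification and to prove necessity through a four-step reduction. For sufficiency, I would check directly: shrinking each $S_i$ weakly moves the $p$-th smallest lower endpoint leftward and the $q$-th largest upper endpoint rightward, yielding responsiveness; order statistics are symmetric (anonymity) and continuous (continuity); and any $\phi\in\Phi^+$, being strictly increasing, commutes with taking order statistics (weak neutrality).

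For necessity, let $f$ satisfy the four axioms and write $f(S)=(\alpha(S),\beta(S))$. First I would show that $\alpha(S)$ and $\beta(S)$ lie in the endpoint set $E(S)\equiv\{\lb_i,\ub_i\}_{i=1}^n$. Any $\phi\in\Phi^+$ fixing $E(S)$ pointwise also fixes each $S_i$ (it fixes both endpoints and is monotone on the interval between them), so by weak neutrality $\phi(f(S))=f(\phi(S))=f(S)$, which forces $\phi$ to fix $\alpha(S)$ and $\beta(S)$. Since the common fixed points of all such $\phi$ are exactly $E(S)$, the endpoints of $f(S)$ must lie there.

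The next and most delicate step is to show $\alpha(S)$ is always a \emph{lower} endpoint (symmetrically for $\beta$). I would start with a profile $S$ whose $2n$ endpoints are distinct and assume for contradiction that $\alpha(S)=\ub_j$ for some $j$. Perturbing to $S'$ by enlarging agent $j$'s interval to $(\lb_j,\ub_j+\epsilon)$, responsiveness forces $\alpha(S')\leq\alpha(S)=\ub_j$, while continuity forces $\alpha(S')\to\ub_j$ as $\epsilon\to 0$. But the previous step gives $\alpha(S')\in E(S')$, and for generic $S$ and small $\epsilon$ the only element of $E(S')$ close to $\ub_j$ is $\ub_j+\epsilon>\ub_j$, contradicting responsiveness. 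Non-generic profiles then follow by approximation and continuity.

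Once $\alpha(S)\in\{\lb_1,\ldots,\lb_n\}$ is established, I would fix the lower endpoints and vary only the upper endpoints: the admissible domain $\prod_i(\lb_i,\infty)$ is connected, $\alpha$ is a continuous function into the fixed finite set $\{\lb_1,\ldots,\lb_n\}$, and hence constant. Thus $\alpha$ depends only on $(\lb_1,\ldots,\lb_n)$ and is symmetric, continuous, and order-equivariant. Restricted to the connected set of strictly increasing tuples, the index $p$ with $\alpha=\lb_p$ is locally, and hence globally, constant. The symmetric argument gives $\beta=\ub_{(n-q+1)}$ for a fixed $q$, and testing the profile $S_i=(i,i+1)$ with $\alpha<\beta$ yields $p+q\leq n+1$. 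The main obstacle is the step ruling out $\alpha(S)=\ub_j$: it orchestrates all four axioms simultaneously and requires careful handling of degenerate profiles where endpoints coincide.
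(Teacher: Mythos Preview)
Your proposal is correct (modulo a harmless slip in the sufficiency paragraph: \emph{enlarging}, not shrinking, each $S_i$ moves the $p$-th lowest lower endpoint leftward). The overall architecture---endpoint membership, then lower-versus-upper separation, then constancy of the index, then feasibility of $(p,q)$---matches the paper's Parts One through Four, but two of the steps are executed quite differently.

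For the step ruling out $\alpha(S)=\ub_j$, the paper does not use continuity at all: it constructs a single $\phi\in\Phi^+$ that fixes every endpoint except $\ub_j$ (and its ties) and sends $\ub_j$ to $\ub_j+\varepsilon$; weak neutrality then gives $\inf f(S')=\inf f(S)+\varepsilon$ directly, which collides with responsiveness. This handles all profiles at once, with no genericity hypothesis and no limiting argument. Your route---continuity to force $\alpha(S')$ near $\ub_j$, Step~1 to force $\alpha(S')\in E(S')$, responsiveness to force $\alpha(S')\le\ub_j$---also works, but buys the same conclusion at the cost of an extra approximation step for degenerate profiles.

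For the constancy of the index, the trade goes the other way. The paper builds explicit reference profiles $S^1,\ldots,S^n$, proves $Q(S^k)=Q(S^{k+1})$ by chaining together specific monotone maps and invoking continuity at a single pinch point (via a preliminary lemma that transports $Q$ along any pair $(\pi,\phi)$ with $\pi S=\phi T$), and then uses responsiveness and anonymity again in Part~Three to extend from the reference profiles to all $S$. Your connectedness argument---$\alpha$ depends only on $(\lb_1,\ldots,\lb_n)$ because a continuous map into a finite set is constant on the connected slice $\prod_i(\lb_i,\infty)$, and then the order-statistic index is locally constant on the connected chamber of strictly ordered tuples---is shorter and more conceptual, and it does not need responsiveness at this stage at all.
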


The second theorem is a characterization of symmetric endpoint rules.

\begin{thm} \label{thm2} An aggregation rule $f$ satisfies responsiveness, anonymity, continuity, and strong neutrality if and only if it is a symmetric endpoint rule. \end{thm}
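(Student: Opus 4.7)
The plan is to use Theorem~\ref{mainthm} as a black box and then extract the symmetry condition $p = q$ from the additional strength of strong neutrality over weak neutrality. Since $\Phi^+ \subset \Phi$, strong neutrality implies weak neutrality, and symmetric endpoint rules are in particular endpoint rules, so the responsiveness, anonymity, continuity, and weak neutrality parts of both directions come free from Theorem~\ref{mainthm}. The only new work is (i) verifying that symmetric endpoint rules are invariant under direction-reversing transformations and (ii) showing, conversely, that an endpoint rule $f^{p,q}$ that satisfies strong neutrality must have $p = q$.

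For the ``if'' direction, I would take an arbitrary $\phi \in \Phi$. If $\phi \in \Phi^+$, weak neutrality (which I already know $f^{p,p}$ satisfies from Theorem~\ref{mainthm}) handles it. If $\phi$ is strictly decreasing, observe that for any $S_i = (a_i, b_i) \in \Sigma$, $\phi(S_i) = (\phi(b_i), \phi(a_i))$, so the profile $\phi(S) = (\phi(S_1), \ldots, \phi(S_n))$ has lower endpoints $\{\phi(b_i)\}_{i \in N}$ and upper endpoints $\{\phi(a_i)\}_{i \in N}$. Because $\phi$ reverses order, the $p$-th lowest element of $\{\phi(b_i)\}$ equals $\phi(b^{(p)})$, where $b^{(p)}$ denotes the $p$-th highest of $\{b_i\}$, and symmetrically the $p$-th highest of $\{\phi(a_i)\}$ equals $\phi(a_{(p)})$. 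Hence
\[
f^{p,p}(\phi(S)) = \bigl(\phi(b^{(p)}), \phi(a_{(p)})\bigr) = \phi\bigl((a_{(p)}, b^{(p)})\bigr) = \phi(f^{p,p}(S)),
\]
which is the required equivariance. (Any decreasing $\phi \in \Phi$ can also be written as the composition of an increasing transformation with $x \mapsto -x$, so one really only needs to check the case $\phi(x) = -x$ and then combine with weak neutrality.)

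For the ``only if'' direction, suppose $f$ satisfies responsiveness, anonymity, continuity, and strong neutrality. Since strong neutrality implies weak neutrality, Theorem~\ref{mainthm} gives integers $p, q \geq 1$ with $p + q \leq n + 1$ such that $f = f^{p,q}$. It remains to show $p = q$. Apply strong neutrality to the transformation $\phi(x) = -x \in \Phi \setminus \Phi^+$. Let $S$ be any profile with distinct lower endpoints and distinct upper endpoints. A direct computation, parallel to the one above, gives
\[
\phi(f^{p,q}(S)) = \bigl(-b^{(q)}, -a_{(p)}\bigr)
\qquad \text{and} \qquad
f^{p,q}(\phi(S)) = \bigl(-b^{(p)}, -a_{(q)}\bigr).
\]
Equating the two forces $b^{(p)} = b^{(q)}$ and $a_{(p)} = a_{(q)}$ for every such profile. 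Choosing any profile whose upper endpoints are, say, $1, 2, \ldots, n$ shows $p = q$, as desired.

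The main obstacle is essentially bookkeeping: keeping the order-reversal clear when $\phi$ is decreasing and making sure that $\phi(S_i)$, defined in the axiom as a pointwise image, really does equal the interval $(\phi(b_i), \phi(a_i))$ in $\Sigma$ (it does, since $\phi$ is a strictly decreasing bijection of $\mathbb{R}$, hence a homeomorphism that sends open bounded intervals to open bounded intervals). Once that identification is in hand, both directions reduce to the short index calculations above, and no further machinery beyond Theorem~\ref{mainthm} is needed.
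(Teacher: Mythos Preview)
Your proposal is correct and follows essentially the same approach as the paper: both invoke Theorem~\ref{mainthm} to reduce to an endpoint rule $f^{p,q}$, then apply strong neutrality with the reflection $\phi(x)=-x$ to force $p=q$. The only cosmetic difference is that the paper phrases the key identity as $\phi(f^{p,q}(S))=f^{q,p}(\phi(S))$ using the $G^\pm$ definition and concludes $f^{p,q}=f^{q,p}$ on all profiles, whereas you use order-statistic notation and a single profile with distinct endpoints to read off $p=q$ directly; both arguments are equivalent.
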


The proofs are in the appendix. The sets of axioms used in both theorems are independent when $n\geq 3$; however, symmetric endpoint rules can be characterized without continuity in the case where $n=2$.\footnote{The proofs of these facts are left as an exercise for the reader.}

\section{Strategyproofness\label{sec:strat}}

To study the question of strategyproofness it is necessary to make an assumption about preferences.\footnote{No assumption about preferences has been made up to this point, as judgments need not come from a preference ordering.} \citet{vb:2010} and \citet{fc:2011} define a class of single-peaked preferences on intervals that relies on a concept of betweenness. An interval is defined to be between two other intervals if its lower endpoint is between the lower endpoints of the other two, and if its upper endpoint is between the upper endpoints of the other two. 
For intervals $R,T\in\Sigma$, let $\mathcal{B}(R,T)\subset\Sigma$ be the set such that $S\in\mathcal{B}(R,T)$ if
\begin{equation}\inf R \leq\inf S \leq\inf T  \textbf{ or } \inf R \geq\inf S \geq\inf T \end{equation} and
\begin{equation}\sup R \leq\sup S \leq\sup T  \textbf{ or } \sup R \geq\sup S \geq\sup T .\end{equation}

Single-peaked preferences are preferences for which (1) there is a unique preferred interval (the `peak') and (2) an interval that is between the peak and a third interval is preferred to that third interval. 
Let $\mathcal{P}$ be the set of preferences on $\Sigma$ such that, for all $\succeq_i\in\mathcal{P}$, (1) there exists $S_i^*\in\Sigma$ such that $T\succeq_i S_i^*$ implies that $T=S_i^*$ and (2) for $R,T\in\Sigma$, $R\in\mathcal{B}(S^*_i,T)$ implies that $S^*_i \succeq_i R \succeq_i T$. 

An aggregation rule is strategyproof if it is always in an agent's interest to reveal her preferred interval, holding the other judgments constant.

\begin{description} \item[Strategyproofness:] For every $S\in\Sigma^N$, $i\in N$, and $\succeq_i\in\mathcal{P}$, $f(S^*_i,S_{-i}) \succeq_i f(S)$.\end{description}

This axiom can also be stated without reference to preferences. Consider an individual who changes her judgment from $S_i$ to $S'_i$. The \textit{out-between-ness} axiom requires the outcome $f(S'_i,S_{-i})$ to be further away from $S_i$ than $f(S)$; in other words, $f(S)$ should be in between $S_i$ and $f(S'_i,S_{-i})$.

\begin{description} \item[Out-between-ness:]  For every $S\in\Sigma^N$ and $S'_i\in\Sigma$,  $f(S)\in\mathcal{B}(S_i,f(S'_i,S_{-i}))$.\end{description}

\begin{lem} \label{Blem} Strategyproofness and out-between-ness are equivalent.\end{lem}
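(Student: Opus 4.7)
The plan is to prove both implications separately, with the forward direction (out-between-ness $\Rightarrow$ strategyproofness) being essentially immediate from the definition of single-peakedness, and the reverse direction requiring the explicit construction of a suitable single-peaked preference.

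For out-between-ness $\Rightarrow$ strategyproofness, fix a profile $S$, agent $i$, and preference $\succeq_i \in \mathcal{P}$ with peak $S_i^*$. I apply out-between-ness to the profile $(S_i^*, S_{-i})$ with alternative report $S_i' = S_i$, which gives $f(S_i^*, S_{-i}) \in \mathcal{B}(S_i^*, f(S))$. Taking $R = f(S_i^*, S_{-i})$ and $T = f(S)$ in property (2) of single-peakedness immediately yields $f(S_i^*, S_{-i}) \succeq_i f(S)$, which is strategyproofness.

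For strategyproofness $\Rightarrow$ out-between-ness, I argue by contrapositive: suppose out-between-ness fails at some $S$ and $S_i'$, writing $A = f(S)$ and $B = f(S_i', S_{-i})$, so $A \notin \mathcal{B}(S_i, B)$. Without loss of generality $\inf A$ fails to lie between $\inf S_i$ and $\inf B$, which forces $|\inf A - \inf S_i| > |\inf B - \inf S_i|$ (the symmetric case handles the supremum). I then exhibit a preference in $\mathcal{P}$ with peak $S_i$ that strictly ranks $B$ above $A$, which contradicts strategyproofness applied to the true profile $(S_i, S_{-i})$ against the deviation $S_i'$.

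The concrete construction uses the weighted distance
\[
d_\lambda(T) = \lambda\,|\inf T - \inf S_i| + (1-\lambda)\,|\sup T - \sup S_i|, \qquad \lambda \in (0,1),
\]
and defines $T \succeq_i T'$ iff $d_\lambda(T) \leq d_\lambda(T')$. Since $d_\lambda$ vanishes only at $T = S_i$, the peak is uniquely $S_i$; and because each coordinate-wise absolute value is minimized along the between-ness relation, property (2) of $\mathcal{P}$ is satisfied. The main (small) obstacle is choosing $\lambda$: because only one coordinate of $A$ is guaranteed to be on the wrong side of the between-ness relation while the other might favor $A$, I take $\lambda$ close enough to $1$ (or $0$, in the symmetric case) so that the favorable coordinate dominates and $d_\lambda(A) > d_\lambda(B)$, producing the needed strict preference $B \succ_i A$.
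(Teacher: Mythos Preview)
Your forward direction is correct and matches the paper's argument. The gap is in the reverse direction: the implication ``$\inf A$ fails to lie between $\inf S_i$ and $\inf B$, which forces $|\inf A-\inf S_i|>|\inf B-\inf S_i|$'' is false. When $\inf A$ lies outside the segment with endpoints $\inf S_i$ and $\inf B$ on the \emph{$S_i$ side} (so that $\inf S_i$ is strictly between $\inf A$ and $\inf B$), the inequality can reverse. Concretely, take $S_i=(0,10)$, $A=(-1,11)$, $B=(5,100)$: then $A\notin\mathcal{B}(S_i,B)$ since $-1\notin[0,5]$, yet
\[
d_\lambda(A)=\lambda\cdot 1+(1-\lambda)\cdot 1=1,\qquad d_\lambda(B)=5\lambda+90(1-\lambda)\geq 5
\]
for every $\lambda\in(0,1)$, so your preference gives $A\succ_i B$ for all $\lambda$ --- the wrong direction. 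No weight $\lambda$ rescues the construction, because $A$ is closer to $S_i$ than $B$ in \emph{both} coordinates while still lying outside the between-ness box.

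The underlying problem is that a weighted $\ell_1$ distance to $S_i$ cannot separate ``$A$ is outside $\mathcal{B}(S_i,B)$'' from ``$A$ is far from $S_i$.'' The paper handles this by adding a discrete penalty rather than reweighting: with $d(T)=|\inf T-\inf S_i|+|\sup T-\sup S_i|$, it sets $u(T)=d(T)$ when $T\in\mathcal{B}(S_i,B)$ and $u(T)=d(T)+d(B)$ otherwise, and orders by $u$. This is still single-peaked (if $R\in\mathcal{B}(S_i,T)$ then $d(R)\leq d(T)$, and membership in $\mathcal{B}(S_i,B)$ is inherited downward along betweenness), and since $B\in\mathcal{B}(S_i,B)$ trivially while $A\notin\mathcal{B}(S_i,B)$ and $A\neq S_i$, one gets $u(A)=d(A)+d(B)>d(B)=u(B)$, yielding the required $B\succ_i A$.
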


\citet{vb:2010} and \citet{fc:2011} show that the median rule is strategyproof. It is straightforward to show that endpoint rules are strategyproof as well.\footnote{In a finite setting, strategyproofness of endpoint rules can be proven using \citet[Theorem 3]{np:2007}.}

I provide a complete characterization of anonymous and strategyproof rules.

Endpoint rules aggregate the upper and lower endpoints independently; that is, these rules aggregate the lower endpoints without considering the upper endpoints, and vice versa. The property can be defined formally as follows:

\begin{description} \item[Independent aggregation of endpoints:] For every $S,T\in\Sigma^N$, (a) $\inf S_i =\inf T_i $ for all $i\in N$ implies that $\inf f(S) = \inf f(T)$ and (b) $\sup S_i =\sup T_i $ for all $i\in N$ implies that $\sup f(S) =\sup f(T)$.\end{description}

I show that all strategyproof rules have this property.

\begin{prop}\label{spind}An aggregation rule $f$ satisfies strategyproofness only if it satisfies independent aggregation of endpoints.\end{prop}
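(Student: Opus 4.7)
The plan is to leverage Lemma~\ref{Blem} and work with out-between-ness, which directly constrains where the aggregate can move when a single agent deviates. The strategy has three stages: (i) prove a single-coordinate version where only one agent's interval changes but its lower endpoint stays fixed; (ii) iterate over agents; (iii) invoke symmetry for the upper-endpoint claim.

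For stage (i), fix a profile $S$ and a deviation $S'_i \in \Sigma$ with $\inf S_i = \inf S'_i$, and let $U = f(S)$ and $V = f(S'_i, S_{-i})$. Applying out-between-ness to the deviation from $S$ to $(S'_i, S_{-i})$ yields $U \in \mathcal{B}(S_i, V)$, so $\inf U$ lies between $\inf S_i$ and $\inf V$. Applying out-between-ness the other way (starting from profile $(S'_i, S_{-i})$ with deviation back to $S_i$) yields $V \in \mathcal{B}(S'_i, U)$, so $\inf V$ lies between $\inf S'_i = \inf S_i$ and $\inf U$. A short case analysis concludes: if $\inf U < \inf V$, the first relation forces $\inf S_i \leq \inf U$, while the second forces $\inf V \leq \inf S_i$, giving $\inf S_i \leq \inf U < \inf V \leq \inf S_i$, a contradiction; the case $\inf U > \inf V$ is symmetric. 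Hence $\inf U = \inf V$.

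For stage (ii), given $S, T \in \Sigma^N$ with $\inf S_i = \inf T_i$ for every $i$, construct the sequence of profiles $S^{(0)} = S$, $S^{(k)} = (T_1, \ldots, T_k, S_{k+1}, \ldots, S_n)$ for $k = 1, \ldots, n$, so that $S^{(n)} = T$. Each consecutive pair $S^{(k-1)}, S^{(k)}$ differs only in coordinate $k$, and the two intervals $S_k, T_k$ share the same lower endpoint, so stage~(i) gives $\inf f(S^{(k-1)}) = \inf f(S^{(k)})$. Chaining the equalities yields $\inf f(S) = \inf f(T)$.

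Stage (iii) is the mirror image: the very same argument applied to $\sup$ in place of $\inf$ (using the second clause in the definition of $\mathcal{B}$) shows that when $\sup S_i = \sup T_i$ for all $i$, we get $\sup f(S) = \sup f(T)$. The main subtlety is just the sign bookkeeping in stage~(i); once one sees that applying out-between-ness in both directions forces each of $\inf U, \inf V$ to lie between $\inf S_i$ and the other, the contradiction in the case $\inf U \neq \inf V$ drops out immediately, and everything else is bookkeeping.
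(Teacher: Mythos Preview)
Your proposal is correct and follows essentially the same route as the paper. The only structural difference is that the paper packages your stage~(i) argument as a separate lemma (Lemma~\ref{SPlem}, the ``lower property''), deriving it from out-between-ness via the same two-sided application and case split, and then invokes that lemma inside the chaining argument; your version inlines that step, which is if anything slightly leaner since you only need the special case $\inf S_i = \inf S'_i$.
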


The proof of this proposition is in the appendix. 

A consequence of Proposition~\ref{spind} is that, under the assumption of strategyproofness, the aggregation of lower (and upper) endpoints is essentially equivalent to the aggregation of single-peaked preferences on a single issue dimension, as studied in \citet{hm:1980}. 
This allows for an analogue of Moulin's characterization of ``phantom voters'' that includes the endpoint rules as a special case.

Define $\extR\equiv\mathbb{R}\cup\{-\infty,\infty\}$ as the extended reals, and define $\extlt$ as a binary relation on $\extR$ such that, for $x,y\in\extR$, $x\extlt y$ if (i) $x,y\in\mathbb{R}$ and $x<y$, (ii) $x=-\infty$, or (iii) $y=\infty$. Let $\extSigma=\{(x,y)\in\extR:x\extlt y\}$ be the open convex intervals of the extended reals. Note that $\Sigma\subset\extSigma$. Let $\med:\extSigma^{2n+1}\rightarrow\extSigma$ be the median rule applied to $2n+1$ intervals of the extended reals; that is, for a vector $Q\in\extSigma^{2n+1}$, $\med(Q)=\{x\in\mathbb{R}: \min\{|\{ i : (-\infty,x]\cap {Q}_i \neq\varnothing \}|,|\{ i : [x,+\infty)\cap {Q}_i \neq\varnothing \}|\} \geq n+1\}$. For a profile $S\in\Sigma^N$ and a vector $P\in\extSigma^{n+1}$, define $S\circ P\in\extSigma^{2n+1}$ as the vector such that $(S\circ P)_i = S_i$ for $i\leq n$ and $(S\circ P)_i = P_{i-n}$ for $i> n$.

The following claim characterizes the class of anonymous and strategyproof rules. The proof relies on \citet{hm:1980} and is found in the appendix.

\begin{claim} \label{claim1} An aggregation rule $f$ satisfies anonymity and strategyproofness if and only if there exists $P\in\extSigma^{n+1}$ such that $f(S) = \med(S\circ P)$ for all $S\in\Sigma^N$.\end{claim}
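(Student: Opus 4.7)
The plan is to leverage Proposition~\ref{spind} to decompose $f$ into two independent single-dimensional aggregators on lower and upper endpoints, apply \citet{hm:1980}'s phantom voter characterization to each, and then splice the resulting phantoms into $n+1$ phantom intervals.

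The ``if'' direction is routine: $\med(S\circ P)$ is anonymous in its first $n$ arguments by symmetry of the median, and its lower (resp.~upper) endpoint is a Moulin median-with-phantoms applied to the lower (resp.~upper) endpoints of $S$. Since the betweenness relation $\mathcal{B}$ factors as a product of endpoint betweenness relations, any single-peaked interval preference induces a single-peaked preference on each endpoint; Moulin's one-dimensional strategyproofness of the median-with-phantoms then blocks any manipulation on either coordinate. Lemma~\ref{Blem} repackages this as strategyproofness.

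For the ``only if'' direction, Proposition~\ref{spind} yields symmetric (by anonymity) functions $g,h$ satisfying $\inf f(S)=g(\inf S_1,\dots,\inf S_n)$ and $\sup f(S)=h(\sup S_1,\dots,\sup S_n)$. I next argue that $g$ and $h$ each satisfy the hypotheses of Moulin's theorem. Given any single-peaked preference on $\mathbb{R}$ with peak $l^*$, I lift it to a single-peaked interval preference with peak $(l^*,u^*)$ whose restriction to intervals sharing a common upper endpoint $u^*$ agrees with the given preference; such a lift exists precisely because $\mathcal{B}$ is a product. A profitable manipulation of $g$ by agent $i$ then lifts — with other agents' upper endpoints chosen sufficiently large — to a profitable manipulation of $f$, contradicting strategyproofness. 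Applying \citet{hm:1980} to $g$ and $h$ produces sorted phantoms $\underline{\alpha}_1\le\dots\le\underline{\alpha}_{n+1}$ and $\overline{\alpha}_1\le\dots\le\overline{\alpha}_{n+1}$ in $\extR$.

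Setting $P_i=(\underline{\alpha}_i,\overline{\alpha}_i)$, the identity $f(S)=\med(S\circ P)$ follows coordinate-wise from the Moulin representations of $g$ and $h$. The main obstacle is verifying $P_i\in\extSigma$, that is, $\underline{\alpha}_i\extlt\overline{\alpha}_i$ for every $i$. For this, I would, for each $i$, construct a profile $S$ in which $n-i+1$ agents report an interval of the form $(-M,-M+\varepsilon)$ and $i-1$ agents report $(M,M+\varepsilon)$, choosing $M$ larger in magnitude than all finite phantoms and $\varepsilon$ small. A direct order-statistic computation shows that the lower endpoint of $\med(S\circ P)$ becomes $\underline{\alpha}_i$ and the upper endpoint becomes $\overline{\alpha}_i$; since $f(S)\in\Sigma$ is a bounded open interval, $\underline{\alpha}_i\extlt\overline{\alpha}_i$ follows. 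The cases where some phantoms equal $\pm\infty$ require adjusting the construction so that real inputs take over the relevant median positions, but the same comparison between the $i$-th lower and $i$-th upper quantities goes through, using that $f$ has $\Sigma$-valued outputs on the profiles just constructed.
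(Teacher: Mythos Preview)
Your proof mirrors the paper's architecture: decompose via Proposition~\ref{spind}, apply \citet{hm:1980} to each endpoint aggregator, then splice the two phantom lists into $n+1$ phantom intervals.

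One small gap is worth flagging. In your lifting argument you build an interval preference whose restriction to intervals with upper endpoint $u^*$ agrees with the given one-dimensional preference; but the two outcomes you must compare, $f(S_i^*,S_{-i})$ and $f(S_i',S_{-i})$, share upper endpoint $h(u^*,\mathbf u_{-i})$, which is generally \emph{not} $u^*$. The easy fix is to take an additively separable lift $U(a,c)=u^{\mathrm{lower}}(a)+v(c)$ with $v$ single-peaked at $u^*$, so that the restriction to \emph{any} fixed upper endpoint reproduces the one-dimensional preference. The paper avoids the lifting altogether: it derives a ``lower property'' (Lemma~\ref{SPlem}) from out-between-ness, which gives $\underline g(p_i,\mathbf x_{-i})\in B(p_i,\underline g(\mathbf x))$ directly and hence strategyproofness of $\underline g$ for every single-peaked preference on $\mathbb{R}$.

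On the splicing step, your sorted pairing $P_i=(\underline\alpha_i,\overline\alpha_i)$ with a per-index test profile is equivalent to the paper's Hall-type argument, which instead supposes no matching permutation exists, extracts a single violating index $k$ with $|\{j:\beta_j\le\alpha_{(k)}\}|\ge k$, and builds one bad profile. Your acknowledgement that the $\pm\infty$ cases need adjustment is correct: when a relevant phantom is infinite, the median in your test profile lands on an agent's $\pm M$ rather than on the phantom, but the resulting comparison still forces $\inf f(S)\ge\sup f(S)$, contradicting $f(S)\in\Sigma$.
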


Here, the elements $P\in\extSigma$ may be thought of as ``phantom intervals;'' unlike the regular intervals, however, these include the half-bounded intervals of the form $(-\infty,x)$ and $(x,\infty)$ for $x\in\mathbb{R}$, and the fully unbounded intervals of the form $(-\infty,-\infty)$, $(-\infty,\infty)$, and $(\infty,\infty)$.\footnote{Alternatively, one may think of these as ``calibration intervals,'' following \citet{wt:2018}.} 

The weak neutrality axiom would further imply that the resulting rules are endpoint rules.
This is because weak neutrality would eliminate the fully bounded and half-bounded intervals; thus all phantom intervals would be fully unbounded.  
The connection can be easily observed: the $p,q$-{th} endpoint rule is characterized by $p\geq 1$ phantoms of the form $(\infty,\infty)$, $q\geq 1$ phantoms of the form $(-\infty,-\infty)$, and $n+1-p-q$ phantoms of the form $(-\infty,\infty)$.\footnote{In addition, strong neutrality would imply that the number of intervals of the form $(-\infty,-\infty)$ must equal the number of the form $(\infty,\infty)$, and therefore that $p=q$.}

However, endpoint rules can be characterized without the full strength of the neutrality axiom. A much weaker axiom, translation equivariance, is sufficient to reach the same result.
For $b\in\mathbb{R}$ and $S_i\in\Sigma$, define $[S+b]_i = (\inf S_i + b , \sup S_i+b)$ to be the interval $S_i$ shifted by $b$. For $S\in\Sigma^N$, define $[S+b]=([S+b]_1,\cdots{},[S+b]_n)$.

\begin{description} \item[Translation equivariance:] For every $S\in\Sigma^N$ and $b\in\mathbb{R}$, $f([S+b]) = [f(S)+b]$.\end{description}

The endpoint rules are characterized by strategyproofness, anonymity, and translation equivariance.\footnote{Translation equivariance is presented as a simple example of an axiom that eliminates fully bounded and half-bounded intervals; however, it is not the only such axiom. A strict version of the responsiveness axiom would also characterize the endpoint rules when combined with strategyproofness and anonymity. I thank Jianrong Tian for this observation.}

\begin{thm} \label{thm:tisp} An aggregation rule $f$ satisfies anonymity, strategyproofness, and translation equivariance if and only if it is an endpoint rule. \end{thm}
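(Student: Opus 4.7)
The forward direction is straightforward. Anonymity of $f^{p,q}$ is immediate from its definition, strategyproofness is noted just before Claim~\ref{claim1}, and translation equivariance follows because shifting each $S_i$ by $b$ shifts every lower and upper endpoint by $b$, so the $p$-th smallest lower endpoint and $q$-th largest upper endpoint each shift by $b$ as well.

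For the converse, suppose $f$ satisfies the three axioms. Claim~\ref{claim1} supplies phantoms $P\in\extSigma^{n+1}$ with $f(S)=\med(S\circ P)$ for every $S\in\Sigma^N$. The heart of the argument is to show that translation equivariance forces each $P_k$ to have both endpoints in $\{-\infty,+\infty\}$, leaving only the three types $(-\infty,-\infty)$, $(-\infty,+\infty)$, and $(+\infty,+\infty)$. Focus on lower endpoints: $\inf f(S)$ equals the $(n+1)$-th smallest among the $2n+1$ values $\inf S_1,\ldots,\inf S_n,\inf P_1,\ldots,\inf P_{n+1}$. Suppose for contradiction that some $\inf P_k=c\in\mathbb{R}$, and let $m$ be the number of other phantoms whose lower endpoints lie strictly below $c$. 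Construct $S\in\Sigma^N$ with $n-m$ of the $\inf S_i$ placed at $c-1/2$ and the remaining $m$ placed at $c+1/2$, assigning finite upper endpoints so that $S\in\Sigma^N$; a position count shows $c$ lands at rank $n+1$ in the sorted list, so $\inf f(S)=c$. Now pick $b\in(0,\delta)$ with $\delta$ small enough that each $\inf S_i+b$ stays strictly on the same side of $c$ and crosses no other phantom lower endpoint. The shifted list has the same relative ordering, so $c$ still occupies rank $n+1$ and $\inf f(S+b)=c$, contradicting the equivariance demand $\inf f(S+b)=c+b$. The symmetric argument applied to upper endpoints rules out finite $\sup P_k$.

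Hence each $P_k\in\{(-\infty,-\infty),(-\infty,+\infty),(+\infty,+\infty)\}$. Let $p$, $q$, and $r=n+1-p-q$ denote the counts of $(+\infty,+\infty)$, $(-\infty,-\infty)$, and $(-\infty,+\infty)$ phantoms respectively. The median computation sketched in the footnote preceding Theorem~\ref{thm:tisp} then yields $\inf f(S)$ equal to the $p$-th smallest $\inf S_i$ and $\sup f(S)$ equal to the $q$-th largest $\sup S_i$, so $f=f^{p,q}$. The constraints $p\geq 1$, $q\geq 1$, and $p+q\leq n+1$ all follow from the fact that $f(S)\in\Sigma$, i.e., a bounded non-empty interval. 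The main obstacle is the bookkeeping around ties—both among phantom endpoints equal to $c$ and between real values and phantom endpoints—but in all such cases the $(n+1)$-th position of the sorted list is locally stable under uniform shifts of the real entries, which is exactly what forces the contradiction when $c$ is finite.
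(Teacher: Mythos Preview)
Your argument is correct, and the route you take in the converse direction differs from the paper's.  After invoking Claim~\ref{claim1} to obtain the phantom profile $P$, the paper proceeds by translating the entire profile $S$ far above all finite phantom endpoints, observes that in that regime $\med(S\circ P)$ coincides with a particular endpoint rule $f^{w,n+1-w-v}$, and then uses translation equivariance to pull this identification back to the original $S$; repeating with a translation far below yields a second endpoint rule $f^{n+1-u-v,u}$, and comparing the two forces $u+v+w=n+1$.  You instead fix a single hypothetical finite phantom endpoint $c$, engineer a profile $S$ so that $c$ sits exactly at the median position in the sorted list of lower (or upper) endpoints, and note that a small uniform shift of the real entries leaves the value at position $n+1$ pinned at $c$---directly contradicting equivariance.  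Your approach is more local and avoids the two-regime comparison; the paper's approach avoids the bookkeeping around ties that you correctly flag as the main nuisance.  Both arguments lean on Claim~\ref{claim1} in the same way, so the difference is purely in how the phantoms are shown to be fully unbounded.

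One cosmetic point: the inequality $p+q\leq n+1$ follows immediately from $r=n+1-p-q\geq 0$ (a count), not from $f(S)\in\Sigma$; only $p\geq 1$ and $q\geq 1$ need the boundedness of $f(S)$.
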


The proof is in the appendix.

\section{Other applications \label{oa}}

The model is motivated by legal rules according to which actions are judged by reference to community standards of behavior. 
However, the aggregation of intervals has been studied elsewhere in economics. 
To varying extents, the results reported in this paper may be applicable to these other contexts.

\subsection{Types of applications \label{ta}}

A first type of application involves a panel of experts that needs to provide an aggregate opinion, where the expert opinions take the form of a range. 
\citet{fc:2011} provides an example of climatologists who must aggregate their views about the likely range of global warming outcomes, when distrust hampers their ability to report anything more than a range of possibilities. 
These expert opinions may be in the form of a range because the estimate can be limited by information constraints, and because the outcome can depend on choices made by future policy makers. 
\citep[See][for additional reasons why experts may make interval forecasts.]{cc:1993}
The opinions may also represent normative uncertainty. 
For example, judges may agree that the law is unclear within a certain range, but not agree on the range.
Whether a tortfeasor has committed a clear violation of the law may be relevant, for example, in the determination of punitive damages.

A second type of application involves a group of principals that needs to delegate a decision to a better informed agent. 
A panel of judges may need to agree on a sentence that takes the form of a range (e.g. five-to-ten years) so that a parole board can base the ultimate sentence on the convict's subsequent behavior. 
Or a legislature may wish to choose a sentencing range for certain crimes, so as to constrain future judges \citep{fc:2011}. 
A significant literature in political science addresses problem of bureaucratic drift, or how administrative agency decisions can deviate away from those preferred by the legislature \citep[see][]{eo:1994,eo:1996,eo:1999,eo:2008,hs:2002,hs:2006,sg:2009,gp:2012,ck:2014}. 
This literature primarily follows the model of delegated delegation found in the principal-agent literature \citep[see][]{bh:1977,bh:1984,ns:1991,am:2008res,ak:2022}, in which delegation commonly takes the form of an interval of permissible alternatives.

A difficulty in this last literature is that, as legislatures are not unitary entities, the identification the legislature with a preference is problematic for reasons first identified by \citet{ar:1951}.
The papers in this literature tend to assume away the difficulty of combining legislators' preferences by making strong assumptions; for example, the existence of a median legislator \citep[see, for example][]{eo:1994}. 
Having done so, these papers can first aggregate the preferences (to that of the median legislator) and then find the legislature's preferred interval.
However, rather than make strong assumptions on the legislators' preferences, one could reverse the order of these steps---first finding each individual legislator's preferred interval and then aggregating these to find the legislature's interval.

A third type of application involves a group of countries trying to determine the terms of a multi-lateral treaty. 
A trade agreement may place bounds on the level of permissible tariffs \citep[see][]{ak:2022}. 
A defense treaty may place limits on the level of military spending. 
Members of the military pact may want their members to commit to spending a certain amount of money each year (to avoid free riding) but may also want to limit the maximum amount that their partners spend (because of distrust).

\subsection{Relevance of the model\label{rm}}

To the what extent does the model apply in these other contexts?
Community standards are based on judgments, and not preferences. 
That is, courts judge the permissibility of actions according to whether the action is considered acceptable in the community, and not according to whether the majority would like it to be allowed.  
These individual judgments are not necessarily the result of preferences over alternatives or over judgments. 
An individual may judge an act to be permitted despite a preference that it be banned, or vice versa.

In other contexts, however, judgments may arise as a predictable result of underlying preferences and information.
For example, a panel of experts that needs to provide a collective judgment is primarily tasked with aggregating information.
But because of different attitudes toward risk, or different discounting rates, their preferences may also differ.
Judges, legislators, and governments similarly differ in their preferences and information.
For this reason, one might build a model that starts with preferences and information, rather than the resultant judgments.

There are, though, some problems with aggregating the preference and information.
The first is that it may be difficult to incentivize the actors to report their preferences and information truthfully.
For this reason \citet{fc:2011} suggests that distrust may lead agents to limit their reporting to an interval.

When preferences can be observed, there is the general difficulty of aggregating preferences in a meaningful way.
Even if it is possible to aggregate the preferences (for example, if the preferences take the form of von Neumann-Morgenstern utility functions), there is an added difficulty of simultaneously aggregating preferences and information \citep[see][]{hz:1979}.
Aggregating the judgments directly will not solve all problems, but may in some cases simplify the task.

A model of preferences and information, however, can be useful even when direct aggregation is impossible.
If only information is to be aggregated, then the existence of a ``correct'' answer makes it possible to study methods in terms of the quality of their results \citep[see, for example][]{iy:1997,gtw:2017}.
Furthermore, the model assumes a full domain of intervals.
But, if intervals come from preferences or information, it is possible that some intervals will never arise in practice.

\subsection{Relevance of the axioms\label{ra}}

To what extent are the axioms meaningful when applied to other contexts?
As the continuity axiom is a technical condition, I discuss the relevance of the remaining axioms---anonymity, responsiveness, neutrality, and strategyproofness---to these alternative applications.

\subsubsection{Anonymity}

Anonymity is important in context of community standards. 
To the extent that individuals are members of the relevant community, their views are generally treated equally.
Similarly, members of legislatures, judges on a court, and independent states are generally presumed to be equal.

Nonetheless, there are at least two reasons why anonymity may at times be undesirable.
The first is if the judgments come from individuals with different levels of expertise.
One might want to include a graduate student on an expert panel but nonetheless treat that student differently from a Nobel laureate.
This may also apply in select cases of community standards, such as a ``reasonable doctor'' standard that relies on the doctors' expertise.

The second reason is that some parties may have unequal bargaining power or unequal legal rights.
For example, consider the case of treaty negotiations. 
In principle, all states are equal.\footnote{``The Organization is based on the principle of the sovereign equality of all its Members.'' U.N. Charter art. 2, para. 1.}
Nonetheless, this principle has exceptions,\footnote{The UN security council is a notable case, as it has five permanent members (United States, United Kingdom, France, Russia, and China) with veto powers of security council decisions. U.N. Charter art. 23, para 1; art. 27, para 3. However, the security council does not negotiate treaties.} and even countries with equal rights may agree to a non-anonymous rule due to the secure the participation of a more powerful state in treaty negotiations. 

\subsubsection{Responsiveness}

The responsiveness axiom is important in the context of community standards. 
No individual should be punished because of an increase in the bounds of tolerance.

The value of this axiom will otherwise vary across cases. 
It will generally be desirable if the intervals arise from preferences.
However, this is less clear if the intervals arise from information.
An expert who provides too wide of a range may be considered to be useless, and ignored.
Alternatively, though, an expert who provides a wide range may be providing valuable information---that in the expert's eyes, we know very little---and if so, a responsive aggregation rule would incorporate that information.

\subsubsection{Neutrality}

The neutrality axioms require the aggregation rule to disregard the cardinal properties of the real line.
This suggests that the neutrality axioms are relevant when aggregating qualitative, but not quantitative, standards.
For example, a community standard of offensiveness is likely to be qualitative---one may be able to order works in terms of offensiveness---but there may be no objective way to make cardinal comparisons.
In contrast, a range of temperatures or probabilities is quantitative.

In some contexts it can be hard to determine whether standards are qualitative or qualitative.
A court may need to determine whether a particular speed was reasonable.
But because there is no objective cardinal mapping between the speeds and the extent to which they are reasonable, that court may choose to disregard the cardinal relationship between the possible speeds.

\subsubsection{Strategyproofness}

The strategyproofness axiom is important in cases in which individuals have preferences over intervals, and in which those preferences satisfy the generalized single-peakedness condition introduced by \citet{vb:2010} and \citet{fc:2011} and followed in this paper.
Legislators who delegate discretion to bureaucratic agencies have preferences over the delegated interval.
It is possible that these preferences are generalized single-peaked, though this remains to be shown.

In the context of community standards, the strategyproofness axiom is mostly irrelevant, as individuals do not directly report their standards.
In some cases, however, it may be possible for individuals to change their standards with the aim that it will affect a future community standard.\footnote{For example, states are often concerned with how their stated standards will, in the long run, affect the creation of customary international law.}

The neutrality and strategyproofness axioms exclude the possibility of averaging rules, in which aggregate endpoints are averages of individual endpoints.
Neutrality is a reasonable justification in the contexts in which averaging does not make much practical sense---for example, when taking a numerical average would not be particularly meaningful.
However, even in cases where averaging makes sense on its face, strategyproofness implies that it should not be used in cases where the parties are interested in the outcome and have some ability to misrepresent their information or preferences.

\section{Conclusion}

I have introduced the endpoint rules and have shown that they are characterized by responsiveness, anonymity, continuity, and neutrality in this setting. 
Furthermore, I have shown that with a suitable restriction on preferences, endpoint rules are strategyproof, and that all strategyproof, anonymous, and neutral aggregation rules are endpoint rules.

One may ask whether more general results can be established by focusing on the abstract properties of the betweenness relation and the order. 
To provide a short example: consider the case of a decision that must be made on two (or more) dimensions. 
We must decide not only how fast it is reasonable to drive, but also, how much training drivers should have before getting behind the wheel. 
There may be a tradeoff; at higher speeds, more training is necessary, although different individuals may have different views about the right tradeoff. 
This is a much more complex question, and it is difficult to study. 
In the case of the real line, the concept of betweenness implies intervals, which can be identified with points in two-dimensional space (such that $x_1 < x_2$). 
In the case of multidimensional space, however, betweenness simply implies convexity, and there is no similarly easy way to represent these convex sets. 
In addition, there may be interesting problems with different underlying structures, for which the simple assumption of Euclidean space may not be applicable.

Future research may investigate the relationship between community standards and other economic problems that can be modeled through interval aggregation. 
Section~\ref{oa} provides reasons to think that the insights in this paper may be applicable in some of these cases.
However, the legal problem that motivates this work is different from those posed in these other contexts. 
Individual standards are subjective and can be motivated by abstract concerns, or no concerns at all. 
Community standards do not necessarily exist to serve a consequentialist goal.\footnote{While the reasonable person standard may exist to reduce the cost of accidents, this is not a universally accepted goal. In the context of obscenity, offense to community standards is often the justification (and not merely the test) for criminal prosecution.} 
This is not necessarily the case in these other economic environments, and the conclusions of this paper should not be applied elsewhere without a careful understanding of the underlying problems.

\section*{Appendix}

I first state and prove the following lemma.

\begin{lem}\label{piphi} If $f$ satisfies anonymity and weak neutrality, then for every $S,T\in\Sigma^N$, every permutation $\pi$ of $N$, and every $\phi\in\Phi^+$  such that $\pi S = \phi T$, if there is an endpoint rule $f^{p,q}$ such that $f(S)=f^{p,q}(S)$, then $f(T)=f^{p,q}(T)$.\end{lem}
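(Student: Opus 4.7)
The plan is to invert the relation $\pi S = \phi T$ using the two axioms and then exploit the fact that endpoint rules themselves satisfy the same axioms. First I would observe that $\pi S = \phi T$ implies $T = \phi^{-1}(\pi S)$, which is the relation I want to transport through $f$.

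Next, I would apply anonymity and weak neutrality to the rule $f$ itself. From $\pi S = \phi T$ and anonymity, $f(S) = f(\pi S) = f(\phi T)$. From weak neutrality, $f(\phi T) = \phi(f(T))$. Composing these two equalities gives $f(S) = \phi(f(T))$, so
\begin{equation*}
f(T) = \phi^{-1}(f(S)).
\end{equation*}

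Then I would verify (briefly) that the endpoint rule $f^{p,q}$ itself satisfies anonymity and weak neutrality. Anonymity is immediate because the $p$-th lowest lower endpoint and $q$-th highest upper endpoint depend only on the multiset of the individual intervals. Weak neutrality holds because any $\phi \in \Phi^+$ is strictly increasing, so it preserves the order of the lower endpoints and of the upper endpoints individually, and hence commutes with the operation of selecting the $p$-th lowest lower and $q$-th highest upper. Applying the same chain of reasoning to $f^{p,q}$ in place of $f$ gives $f^{p,q}(T) = \phi^{-1}(f^{p,q}(S))$.

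Finally, I would combine the two identities with the hypothesis $f(S) = f^{p,q}(S)$:
\begin{equation*}
f(T) = \phi^{-1}(f(S)) = \phi^{-1}(f^{p,q}(S)) = f^{p,q}(T),
\end{equation*}
which is the conclusion. The only real obstacle is the verification that endpoint rules satisfy weak neutrality; this is routine but requires noting that strict monotonicity of $\phi$ is precisely what guarantees that $\phi$ preserves the rank of each endpoint, so that taking $\phi$ before or after selecting the $p$-th smallest lower endpoint yields the same answer. Everything else is a direct manipulation of the two axioms.
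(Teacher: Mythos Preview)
Your proof is correct and follows essentially the same route as the paper's: both use anonymity and weak neutrality to obtain $f(T)=\phi^{-1}(f(S))$, invoke the fact that $f^{p,q}$ itself satisfies these axioms to obtain $f^{p,q}(T)=\phi^{-1}(f^{p,q}(S))$, and then combine via the hypothesis $f(S)=f^{p,q}(S)$. The paper simply asserts $f^{p,q}(S)=\phi f^{p,q}(T)$ ``by the definition of the endpoint rule'' where you spell out the anonymity and weak neutrality of $f^{p,q}$, but the argument is the same.
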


\begin{proof}[Proof of Lemma \ref{piphi}]
Let $S,T\in\Sigma^N$, and let $\pi$ be a permutation of $N$ and $\phi\in\Phi^+$  such that $\pi S = \phi T$. Let $f$ satisfy anonymity and weak neutrality, and let $f^{p,q}$ be an endpoint rule such that $f(S)=f^{p,q}(S)$. Note that by the definition of the endpoint rule, $f^{p,q}(S)=\phi f^{p,q}(T)$. By anonymity, $f(S)=f(\pi S)=f(\phi T)$. By weak neutrality, $f(\phi T)=\phi f(T)$, and therefore $f(S)=\phi f(T)$. Because $\phi$ is strictly monotone there exists an inverse $\phi^{-1}\in\Phi$ such that $\phi^{-1}\phi S = S$; therefore $\phi^{-1} f(S) = f(T)$ and $\phi^{-1} f^{p,q}(S)=f^{p,q}(T)$. Because $f(S)=f^{p,q}(S)$ it follows that $f(T) = f^{p,q}(T)$.
\end{proof}

\begin{proof}[Proof of Theorem \ref{mainthm}] That endpoint rules satisfy the four axioms is trivial. Let $f$ satisfy the four axioms. I show that $f$ must be an endpoint rule.
For $S\in\Sigma^N$ and $p,q\leq n$, define the function $f^{p,q}(S) \equiv \{ x : |\Gp{S}{x}| \geq p \textnormal{ and } |\Gq{S}{x}| \geq q \}$, and define $\Q(S)\equiv\{(p,q)\in N^2:f(S)=f^{p,q}(S)\}$. I will show that there exists $p,q\leq n$, where $p+q\leq n+1$, such that $(p,q)\in \Q(S)$ for all $S\in\Sigma^N$.

\textbf{Part One:} I show that $|\Q(S)|\geq 1$ for all $S\in\Sigma^N$.

For $S\in\Sigma^N$ define $L(S)\equiv\cup_{i} \inf S_i $ and $U(S)\equiv\cup_{i} \sup S_i$. It is sufficient to show that $\inf \SO \in L(S)$ and $\sup \SO  \in U(S)$.

First, I show that for all $S\in\Sigma^N$, $\inf \SO ,\sup \SO  \in L(S)\cup U(S)$. To see this, let $S\in\Sigma^N$ and suppose, contrariwise, that $\inf \SO \not\in L(S)\cup U(S)$. Let $\phi\in\Phi^+$ such that $\phi(\inf \SO )\neq\inf \SO $ and, for all $i\in N$, $\phi(\inf S_i )=\inf S_i $ and $\phi(\sup S_i )=\sup S_i $. Then $S = \phi S$, so $\SO = f(\phi S)$. By weak neutrality, $f(\phi S)=\phi(f(S))$ and therefore, $\SO = \phi \SO$. It follows that $\inf \SO =\inf(\phi \SO)=\phi(\inf \SO )$, a contradiction. 

Next, I show that for all $S\in\Sigma^N$, $\inf \SO \in L(S)$ and $\sup \SO  \in U(S)$. Suppose, contrariwise, that this is false, and assume, without loss of generality, that $\inf \SO  \not\in L(S)$.
Because $\inf \SO  \not\in L(S)$, it must be that $\inf \SO  \in U(S)$. Therefore, there exists a group $M\subseteq N$, $M \neq \varnothing$, such that $\inf \SO =\sup S_j $ for all $j\in M$.

Let $\varepsilon > 0$ such that, for all $i\in N$, $\inf \SO  \geq \inf S_i $ if and only if $\inf \SO  + \varepsilon \geq \inf S_i $, and, for all $j\in N\setminus M$, $\inf \SO  \geq \sup S_i $ if and only if $\inf \SO  + \varepsilon \geq \sup S_i $.

Let $\phi\in\Phi^+$ such that (i) for all $i\in N$, $\phi(\inf S_i )=\inf S_i $, (ii) for all $j\in N\setminus M$,  $\phi(\sup S_j )=\sup S_j $, and (iii) $\phi(\inf \SO ) = \inf \SO  + \varepsilon$.

Let $S'\in\Sigma^N$ such that, for all $j\in N\setminus M$, $S'_j = S_j$ and, for all $k\in M$, $S'_k = (\inf S_k,\sup S_k + \varepsilon)$. Because $S_i \subseteq S'_i$ for all $i\in N$ it follows that $\SO\subseteq \SpO$, and therefore that $\inf \SO  \geq \inf \SpO$. 
Because $\phi S = S'$ it follows that $f(\phi S) = \SpO$, and by responsiveness that $f(\phi S) = \phi f(S) = \phi \SO$. Hence $\phi \SO = \SpO$, and therefore $\inf \SO  + \varepsilon = \inf \SpO$. This implies that $\inf \SO < \inf \SpO$, a contradiction.

\textbf{Part Two:}
For $k\in N$, let $S^k\in\Sigma^N$ such that, for all $i<k$, $S^k_i=(2i-1,2i)$, and for all $j\geq k$, $S_j^k=(j+k-1,j+n)$. I prove that $\Q(S^1)=\Q(S^{n})$. It is sufficient to show that $\Q(S^k)=\Q(S^{k+1})$ for all $k\in N\setminus\{n\}$.

Let $k\in N\setminus\{n\}$. From Part One we know that $|\Q(S^k)|\geq 1$. Because $f^{p,q}(S^k) = f^{p',q'}(S^k)$ only if $p'=p$ and $q'=q$, it follows that $|\Q(S^k)|\leq 1$. Thus, $|\Q(S^k)|= 1$.

For $\ell \in\{1,\hdots{},n-k\}$, let $T^{\ell,+,\varepsilon},T^\ell, T^{\ell,-,\varepsilon}\in\Sigma^N$ such that  $T^{\ell,+,\varepsilon}_k=(2k-1,n+k-\ell+\varepsilon)$, $T^{\ell}_k=(2k-1,n+k-\ell)$, $T^{\ell,-,\varepsilon}_k=(2k-1,n+k-\ell-\varepsilon)$, and for $i\neq k$, $T^{\ell,+,\varepsilon}_i=T^{\ell}_i=T^{\ell,-,\varepsilon}_i=S^k_i$.

First, I show that for $\ell \in\{1,\hdots{},n-k\}$ and some $\varepsilon^*>0$, $\Q(T^{\ell,+,\varepsilon})=\Q(T^{\ell,-,\varepsilon})$ for all $\varepsilon<\varepsilon^*$. To see this, let $Y=\{(a,b)\in\Sigma : \inf f(T^\ell) - 0.1 < a < \inf f(T^\ell) + 0.1\textnormal{ and }\sup f(T^\ell) - 0.1 < b < \sup f(T^\ell) + 0.1\}$.  
Note that, by construction, $\inf f(T^\ell) - 0.1 < \inf f(T^\ell) + 0.1 < \sup f(T^\ell) - 0.1  < \sup f(T^\ell)$ and therefore  $Y \in \tau$.
Also by construction, $f(T^\ell)\in Y$, and therefore $T^\ell \in f^{-1}(Y)$. 
By continuity, $f^{-1}(Y)\in\tauprod$. 
It follows that there exists $\varepsilon^*>0$ such that  $T^{\ell,+,\varepsilon},T^{\ell,-,\varepsilon}\in  f^{-1}(Y)\in\tauprod$, and $f(T^{\ell,+,\varepsilon}), f(T^{\ell,-,\varepsilon})\in Y$, for $\varepsilon<\varepsilon^*$.
Let $\bar{\varepsilon} > 0$ such that $\bar{\varepsilon} < \varepsilon^*$.
Then (a) $Q(T^{\ell,+,\bar{\varepsilon}})=Q(T^{\ell,-,\bar{\varepsilon}})$.

Next, let $\phi^0\in\Phi^+$ such that (i) $\phi^0(n+k)=n+k-1+\bar{\varepsilon}$ and (ii) for all $x\in\mathbb{N}\setminus\{n+k\}$, $\phi^0(x)=x$. Because $\phi^0 S^k = T^{1,+,\bar{\varepsilon}}$, it follows as a consequence of Lemma \ref{piphi} that (b) $\Q(S^k) = \Q(T^{1,+,\bar{\varepsilon}})$.

For $\ell\in\{1,\hdots,n-k-1\}$, let $\phi^\ell\in\Phi^+$ such that (i) $\phi^\ell(n+k-\ell-\bar{\varepsilon})=n+k-\ell-1+\bar{\varepsilon}$ and (ii) for all $x\in\mathbb{N}$, $\phi^\ell(x)=x$. Because $\phi^\ell T^{\ell,-.\bar{\varepsilon}} = T^{\ell+1,+,\bar{\varepsilon}}$, it follows as a consequence of Lemma \ref{piphi} that (c) $\Q(T^{\ell,-,\bar{\varepsilon}}) = \Q(T^{\ell+1,+,\bar{\varepsilon}})$.

Let $\phi^{n-k}\in\Phi^+$ such that (i) $\phi^{n-k}(2k-\bar{\varepsilon})=2k$, (ii) for all $i\in\{k+1,\hdots,n\}$, $\phi^{n-k}(i+k-1)=i+k$, and (iii) for all $x\in\mathbb{N}\setminus\{2k,\hdots,n+k\}$, $\phi^{n-k}(x)=x$. Because $\phi^{n-k} T^{n-k,-,\bar{\varepsilon}} = S^{k+1}$, it follows as a consequence of Lemma \ref{piphi} that (d) $\Q(T^{n-k,-,\bar{\varepsilon}}) = \Q(S^{k+1})$. 

By combining (a), (b), (c), and (d), we have that $\Q(S^k) = \Q(S^{k+1})$.

\textbf{Part Three:} Let $\dot{p},\dot{q}\leq n$ such that $f(S^1)=f^{\dot{p},\dot{q}}(S^1)$. I prove that for all $S\in\Sigma^N$, $f(S)=f^{\dot{p},\dot{q}}(S)$.

First, I show that $f(S)\subseteq f^{\dot{p},\dot{q}}(S)$. Suppose that this is false. Then by part one, $\SO = f^{p',q'}(S)$, where either $p'<\dot{p}$ or $q'<\dot{q}$. Without loss of generality, assume that $p'<\dot{p}$. Let $x\in \SO$ such that $x<\inf f^{\dot{p},\dot{q}}(S)$. 

Let $\pi $ be a permutation of $N$ such that, for all $i,j\in N$, $\inf S_i  < \inf S_j $ implies that $\pi(i)<\pi(j)$.
Observe that $\inf f^{1,1}(S) \leq \inf S_{\pi^{-1}(p')} < x <  \inf S_{\pi^{-1}(p*)} < \sup f^{1,1}(S)$. Let $\phi\in\Phi^+$ such that (a) $\phi\inf f^{1,1}(S)>\dot{p}-1$, (b) $\phi\inf S_{\pi^{-1}(p')} = \dot{p}-\frac{1}{2}$, (c) $\phi x = \dot{p}-\frac{1}{4}$, (d) $\phi\inf S_{\pi^{-1}(p'^*} > n$, and (e) $\phi\sup f^{1,1}(S)<n+1$.

Note that $\phi S_i \subseteq \pi S^1_{i}$ for all $i\in N$. 
To see that this is true, observe that for $j$ such that $\pi(j)<\dot{p}$, $\inf(\phi S_j) > \dot{p}-1 \geq \pi(j) = \inf S^1_{\pi(j)}$ and $\sup(\phi S_j) < n+1 \leq n+\pi(j) = \sup S^1_{\pi(j)}$, and for $j$ such that $\pi(j)\geq \dot{p}$, $\inf(\phi S_j) > n \geq \pi(j) = \inf S^1_{\pi(j)}$ and $\sup(\phi S_j) < n+1 \leq n+\pi(j) = \sup S^1_{\pi(j)}$.

Because $f$ satisfies responsiveness and anonymity, $\phi \SO \subseteq f(S^1)$. 
Because $\inf f^{\dot{p},\dot{q}}(S) \in \SO$, $\phi \inf f^{\dot{p},\dot{q}}(S) \in \phi \SO$. 
By construction, $\phi \inf f^{\dot{p},\dot{q}}(S) = \dot{p}$. But this implies that $\dot{p}\in f(S^1)$, a contradiction.

Next, I show that  $f^{\dot{p},\dot{q}}(S) \subseteq \SO$.
Let $x\in f^{\dot{p},\dot{q}}(S)$. 
I show that $x\in \SO$.

For $i\in N$, choose $x_i\in\mathbb{R}$ such that (a) $x_i\in S_i$, (b) $x_i\neq x_j$ for $j\neq i$, (c) $|\{i\in N: x_i \leq x\}| = \dot{p}$, and (d)  $|\{i\in N: x_i \geq x\}| = \dot{q}$.
Let $\varepsilon > 0$ such that (i) for all $i\in N$, $(x_i-\varepsilon,x_i+\varepsilon)\in S_i)$, and (ii) $\varepsilon < \min_{i,j} |x_i-x_j|$. Define $X\in\Sigma^N$ such that $X\equiv (x_i-\varepsilon,x_i+\varepsilon)$.

Let $\pi'$ be a permutation of $N$ such that, for all $i,j\in N$, $x_i < x_j$ implies that $\pi(i)<\pi(j)$.
Let $\phi\in\Phi^+$ such that for all $i\in N$, $\phi(x_i-\varepsilon)=2\pi(i)-1$ and $\phi(x_i+\varepsilon)=2\pi(i)$.
Note that $\phi(x)> 2\dot{p}-1$ and that $\phi(x)<2(n+1-\dot{q})$.

Note that $\pi\phi X = S^n$, which implies that $f(\pi\phi X) = f(S^n)= (2\dot{p}-1,2(n+1-\dot{q}))$. By neutrality and anonymity, it follows that $\phi f(X) = (2\dot{p}-1,2(n+1-\dot{q}))$ which implies that $f(X) = (x_{\dot{p}}-\varepsilon,x_{n+1-\dot{q}}+\varepsilon)$, and hence, $x\in f(X)$. Because $X_i\subseteq S_i$ for all $i\in N$, it follows that $x\in \SO$.

\textbf{Part Four:} The last step is to show that $\dot{p}+\dot{q}\leq n+1$. Suppose contrariwise that $\dot{p}+\dot{q}>n+1$. Then $\dot{p}-1\geq n+1-\dot{q}$. Thus $\inf f(S^n) = \inf S^n_{\dot{p}} = 2\dot{p}-1$ and $\sup f(S^n)) = \sup S^n_{n+1-\dot{q}} = 2(n+1-\dot{q})$. Because $\dot{p}-1\geq n+1-\dot{q}$ it follows that $2\dot{p}-1 > 2(\dot{p}-1) \geq 2(n+1-\dot{q})$. This implies that $\inf f(S^n)  > \sup f(S^n)$, a contradiction.
\end{proof}

\begin{proof}[Proof of Theorem \ref{thm2}] That symmetric endpoint rules satisfy the axioms is trivial. Let $f$ satisfy the axioms. Because strong neutrality implies weak neutrality, $f$ is an endpoint rule with quotas $p$ and $q$. I show that $p=q$. It is sufficient to show that for all $S\in\Sigma^N$, $f^{p,q}(S)=f^{q,p}(S)$.

Let $S\in\Sigma^N$ and let $\phi\in\Phi$ be the transformation such that $\phi(x) = -x$ for all $x\in\mathbb{R}$. 
Note that $f^{p,q}(S)=f^{p,q}(\phi\phi S)$ and, by strong neutrality, that $f^{p,q}(\phi\phi S)=\phi f^{p,q}(\phi S)$. 
It remains to be shown that $f^{p,q}(\phi(S))=\phi(f^{q,p}(S))$.
To see this, note that $\phi(f^{p,q}(S))=\{\phi(x): |\Gp{S}{x}| \geq p \textnormal{ and } |\Gq{S}{x}| \geq q \}$. 
Because $\phi = \phi^{-1}$, it follows that $\phi(f^{p,q}(S))=\{x: |\Gp{S}{\phi(x)}| \geq p \textnormal{ and } |\Gq{S}{\phi(x)}| \geq q \}$. 
Because $\phi$ is decreasing, $\Gp{S}{\phi(x)}=\Gq{\phi(S)}{x}$ and $\Gq{S}{\phi(x)}=\Gp{\phi(S)}{x}$. 
Hence,  $\phi(f^{p,q}(S))=\{x: |\Gq{\phi(S)}{x}| \geq p \textnormal{ and } |\Gp{\phi(S)}{x}| \geq q \}=f^{q,p}(\phi(S))$. 
\end{proof}

\begin{proof}[Proof of Lemma~\ref{Blem}]
First, let $f$ satisfy out-between-ness  and let $S\in\Sigma^N$, $i\in N$, and $\succeq_i\in\mathcal{P}$ with associated peak $S^*_ i$. 
By out-between-ness, $f(S^*_i,S_{-i})\in\mathcal{B}(S^*_i,f(S))$.
By the definition of $\mathcal{P}$ it follows that $f(S^*_i,S_{-i}) \succeq_i f(S)$.

Next, let $f$ satisfiy strategyproofness and suppose, contrariwise, that there exists $S\in\Sigma^N$ and all $S^*_i\in\Sigma$ such that $f(S^*_i,S_{-i})\not\in\mathcal{B}(S^*_i,f(S))$. For an interval $T_i\in\Sigma$ define $d(T_i) = |\inf S^*_i - \inf T_i| + |\sup S^*_i - \sup T_i|$, and define

$$u(T_i) = \left\{\begin{array}{ll} d(T_i), & \textnormal{for } T_i\in\mathcal{B}(S^*_i,f(S)) \\ & \\ d(T_i) + d(f(S)),& \textnormal{otherwise.}\end{array}\right.$$

Let $\succeq^*_i$ be the preference such that $T_i \succeq^*_i T'_i$ if $u(T_i) \leq u(T'_i)$.

I first show that $\succeq^*_i\in\mathcal{P}$.
It is trivial to see that $S^*_i$ is the peak of $\succeq^*_i$. 
To see that it is single peaked, let $T_i$ and $T'_i$ be intervals such that $T_i\in\mathcal{B}(S^*_i,T'_i)$.
I show that $T_i \succeq^*_i T'_i$.
Note that (i) $d(T_i) \leq d(T'_I)$ and (ii) $T_i\in\mathcal{B}(S^*_i,f(S))$ if $T'_i\in\mathcal{B}(S^*_i,f(S))$.
It follows that $u(T_i) \leq u(T'_i)$.

Because $f(S^*_i,S_{-i})\not\in\mathcal{B}(S^*_i,f(S))$ it follows that $f(S^*_i,S_{-i})\neq S^*_i$ and therefore that $d(f(S^*_i,S_{-i})) > 0$.
It follows that $u(f(S^*_i,S_{-i})) = d(f(S^*_i,S_{-i})) + d(f(S)) > d(f(S)) = u(f(S))$, and therefore that $f(S^*_i,S_{-i}) \not\succeq^*_i f(S)$. 
However, by strategyproofness,  as $\succeq^*_i\in\mathcal{P}$ it follows that $f(S^*_i,S_{-i}) \succeq^*_i f(S)$, a contradiction.
\end{proof}

For two points $x,y\in\mathbb{R}$, let $B(x,y)=\{z\in\mathbb{R}:x\leq z\leq y\textnormal{ or }y\leq z \leq x\}$.

\begin{description} \item[Lower property:] For $i\in N$ and $S,T\in\Sigma^N$ such that $S_j=T_j$ for all $j\neq i$, either (a) $\inf f(S) =\inf f(T) $ or 
(b) $\inf f(S) \in B(\inf S_i ,\inf f(T) )$ and $\inf f(T) \in B(\inf T_i ,\inf f(S) )$.\end{description}

\begin{description} \item[Upper property:] For $i\in N$ and $S,T\in\Sigma^N$ such that $S_j=T_j$ for all $j\neq i$, either (a) $\sup f(S) =\sup f(T) $ or
(b) $\sup f(S) \in B(\sup S_i ,\sup f(T) )$ and $\sup f(T) \in B(\sup T_i ,\sup f(S) )$.\end{description}

I next state and prove the following lemma. 

\begin{lem}\label{SPlem} Strategyproofness implies both the lower property and the upper property.\end{lem}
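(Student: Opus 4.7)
The plan is to reduce the statement to out-between-ness via Lemma~\ref{Blem}, and then simply unpack what out-between-ness says about the \emph{inf} and \emph{sup} coordinates separately. Since strategyproofness is equivalent to out-between-ness, I have at my disposal the general assertion that for any profile and any unilateral deviation, the original aggregate lies in $\mathcal{B}$ between the deviator's original interval and the new aggregate.

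Concretely, fix $i\in N$ and $S,T\in\Sigma^N$ with $S_j = T_j$ for all $j\neq i$. I would apply out-between-ness twice. First, taking the base profile to be $S$ and the deviation to be $S'_i = T_i$ (so that $(S'_i, S_{-i}) = T$), out-between-ness yields $f(S)\in\mathcal{B}(S_i, f(T))$. Second, with base profile $T$ and deviation $S'_i = S_i$, it yields $f(T)\in\mathcal{B}(T_i, f(S))$.

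Now I would translate these two $\mathcal{B}$-statements into the coordinate form the lower and upper properties require. By the definition of $\mathcal{B}(R,T)$, the relation $f(S)\in\mathcal{B}(S_i,f(T))$ is exactly the conjunction $\inf f(S)\in B(\inf S_i,\inf f(T))$ and $\sup f(S)\in B(\sup S_i,\sup f(T))$. Similarly, $f(T)\in\mathcal{B}(T_i,f(S))$ gives $\inf f(T)\in B(\inf T_i,\inf f(S))$ and $\sup f(T)\in B(\sup T_i,\sup f(S))$. Pairing the two inf-statements gives clause (b) of the lower property, and pairing the two sup-statements gives clause (b) of the upper property. The disjunction with clause (a) is then automatic: if $\inf f(S)=\inf f(T)$, clause (a) holds; otherwise clause (b) still holds by the above, and the same dichotomy applies on the sup side.

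There is no real obstacle here beyond correctly matching the notation between $\mathcal{B}(\cdot,\cdot)$ on intervals and $B(\cdot,\cdot)$ on the real line, and carefully reading off the inf-coordinate and the sup-coordinate of each $\mathcal{B}$-condition. In particular, no further appeal to the axioms is needed once Lemma~\ref{Blem} has converted strategyproofness into out-between-ness.
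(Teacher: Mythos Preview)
Your proof is correct and shares its skeleton with the paper's: both invoke Lemma~\ref{Blem} to convert strategyproofness into out-between-ness, then apply out-between-ness twice (once with base profile $S$ and deviation $T_i$, once with base profile $T$ and deviation $S_i$) to obtain $f(S)\in\mathcal{B}(S_i,f(T))$ and $f(T)\in\mathcal{B}(T_i,f(S))$.

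Where you diverge is in the final step. You observe---correctly---that the inf-coordinate of these two $\mathcal{B}$-conditions is \emph{exactly} clause~(b) of the lower property, so the disjunction ``(a) or (b)'' holds trivially because (b) always holds. The paper instead performs a three-case analysis on the position of $\inf f(S)$ relative to $\inf S_i$ and $\inf T_i$, deriving clause~(a) in two cases and clause~(b) in the third. Your route is shorter: the case split is unnecessary once one notices that the definition of $\mathcal{B}$ on intervals decomposes coordinatewise into $B$-conditions on the reals, so clause~(b) is immediate. The paper's analysis does yield the slightly finer information that (a) holds whenever $\inf f(S)$ lies outside $B(\inf S_i,\inf T_i)$, but that refinement is not needed for the lemma as stated (nor for its use in Proposition~\ref{spind}).
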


\begin{proof}
Let $i\in N$ and let $S,T\in\Sigma^N$ such that $S_j=T_j$ for all $j\neq i$. Let $f$ satisfy strategyproofness.

By Lemma~\ref{Blem}, strategyproofness implies out-between-ness, which implies that (i) $f(S)\in\mathcal{B}(S_i,f(T))$ and (ii) $f(T)\in\mathcal{B}(T_i,f(S))$. 
To prove the lower property, there are three cases:

\textbf{Case 1.} $\inf f(S) \in B(\inf S_i ,\inf T_i )$.
Statement (ii) implies that $\inf f(T) \in B(\inf T_i ,\inf f(S) )$, which implies that $\inf f(S) \in B(\inf S_i ,\inf f(T) )$.

\textbf{Case 2.} $\inf f(S) >\inf S_i ,\inf T_i $.
By statement (i) $\inf f(S) \leq\inf f(T) $.
By statement (ii) $\inf f(T) \leq\inf f(S) $.
This implies that $\inf f(S) =\inf f(T) $.

\textbf{Case 3.} $\inf f(S) <\inf S_i ,\inf T_i $.
By statement (i) $\inf f(T) \leq\inf f(S) $.
By statement (ii) $\inf f(S) \leq\inf f(T) $.
This implies that $\inf f(S) =\inf f(T) $.

The upper property is proven in a similar fashion.\end{proof}

\begin{proof}[Proof of Proposition \ref{spind}:]
I will show that strategyproofness implies the independent aggregation of the lower endpoints. That strategyproofness implies the independent aggregation of the upper endpoints follows from a dual argument.  

Let $S,T\in\Sigma^N$ such that $\inf S_i =\inf T_i $ for all $i\in N$ and let $f$ satisfy strategyproofness. I will show that $\inf f(S) =\inf f(T) $.
For $j\in N$ let $S^j\in\Sigma^N$ such that $S^j_i=S_i$ for $i\leq j$ and such that $S^j_i=T_i$ otherwise.
Define $S^0\equiv S$.

Let $k\in N$. It is sufficient to prove that $\inf f(S^{k-1}) = \inf f(S^{k})$.

Because $f$ satisfies strategyproofness it follows from Lemma \ref{SPlem} that $f$ satisfies the lower property.
By the lower property, either (a) $\inf f(S^{k-1}) = \inf f(S^{k}) $ or (b) $\inf f(S^{k-1}) \in B(\inf S^{k-1}_k,\inf f(S^{k}))$ and $\inf f(S^{k})\in B(\inf S^{k}_k,\inf f(S^{k-1}))$.

Because $\inf f(S^{k-1}) \in B(\inf S^{k-1}_k,\inf f(S^{k}))$ it follows that either (i) $\inf S^{k-1}_k \geq \inf f(S^{k-1}) \geq \inf f(S^{k})$ or (ii) $\inf S^{k-1}_k \leq \inf f(S^{k-1}) \leq \inf f(S^{k})$.
That $\inf f(S^{k})\in B(\inf S^{k}_k,\inf f(S^{k-1}))$ implies that either (iii) $\inf S^{k}_k \geq \inf f(S^{k}) \geq \inf f(S^{k-1})$ or (iv) $\inf S^{k}_k \leq \inf f(S^{k}) \leq \inf f(S^{k-1})$.
The combinations of (i) and (iii) and of (ii) and (iv) directly imply that $\inf f(S^{k-1}) = \inf f(S^{k})$.
The combinations of (i) and (iv) and of (ii) and (iii), combined with the fact that  $\inf S^{k-1}_k = \inf S^{k}_k$, imply that $\inf f(S^{k-1})=\inf f(S^{k})$.  
\end{proof}

\begin{proof}[Proof of Claim~\ref{claim1}]That $\med$ satisfies anonymity and strategyproofness is straightforward.
Let $f$ satisfy anonymity and strategyproofness. 
I show there exists $P\in\extSigma^{n+1}$ such that $f(S) = \med(S\circ P)$ for all $S\in\Sigma^N$.

For a function $g:\mathbb{R}^N\rightarrow\mathbb{R}$, define $g$ to be anonymous if for every permutation $\pi$ of $N$, $g(\mathbf{x}_1,\hdots{},\mathbf{x}_n)=g(\mathbf{x}_{\pi(1)},\hdots{},\mathbf{x}_{\pi(n)})$.
Define $g$ to be strategyproof if (a) for every agent $i$ with single-peaked preferences $\succeq_i$ over $\mathbb{R}$ and associated peak $p_i$, and (b) for every $\mathbf{x}\in\mathbb{R}^N$, $g(p_i,\mathbf{x}_{-i})\succeq_i g(\mathbf{x})$. 

By Proposition~\ref{spind}, because $f$ is strategyproof it satisfies independent aggregation of endpoints.
Consequently, there exists $\underline{g},\overline{g}:\mathbb{R}^N\rightarrow\mathbb{R}$ such that for all $S\in\Sigma^N$, $f(S) = (\underline{g}(\{\inf S_i\}),\overline{g}(\{\sup S_i\}))$. 
I first prove that $\underline{g}$ is anonymous and strategyproof. 
A similar argument shows that $\overline{g}$ is anonymous and strategyproof.

To show that $\underline{g}$ is anonymous, let $\pi$ be a permutation of $N$ and let $S\in\Sigma^N$. Then $\inf f(\pi S)=\underline{g}(\{\inf S_{\pi(i)}\})$. By anonymity, $f(S)=f(\pi S)$; therefore, $\underline{g}(\{\inf S_{i}\})=\underline{g}(\{\inf S_{\pi(i)}\})$.

To show that $\underline{g}$ is strategyproof, let $S\in\Sigma^N$, $i\in N$, and $\succeq_i\in\mathcal{P}$.
Let $T\in\Sigma^N$ such that $T_i=S^*_i$ (agent $i$'s ideal point) and such that $T_j=S_j$ for all $j\neq i$.
 Because strategyproofness implies independent aggregation of endpoints, there is a single-peaked preference relation $\succeq^-_i$ over $\mathbb{R}$ with associated peak $p_i=\inf S^*_i$.
Define $\mathbf{x} = (\inf S_1,\cdots,\inf S_n)$.
 Because $f$ is strategyproof, it satisfies the lower property.
Therefore either (i) $\inf f(S)= \inf f(T)$ or (ii) $\inf f(S) \in B(\inf S_i ,\inf f(T) )$ and $\inf f(T) \in B(\inf T_i ,\inf f(S) )$.
 If (i) then $\underline{g}(p_i,\mathbf{x}_{-i})=\inf f(S)=\inf f(T) = \underline{g}(\mathbf{x})$, which implies that $\underline{g}(p_i,\mathbf{x}_{-i})\succeq_i \underline{g}(\mathbf{x})$,
If (ii) then $\underline{g}(p_i,\mathbf{x}_{-i}) \in B(p_i,\underline{g}(\mathbf{x}))$, which implies that $\underline{g}(p_i,\mathbf{x}_{-i})\succeq_i \underline{g}(\mathbf{x})$. Thus $\underline{g}$ is strategyproof.

Because $\underline{g}$ is anonymous and strategyproof, it follows from \citet[Proposition 2]{hm:1980} that there exist $n+1$ real numbers $\alpha_1,\hdots{},\alpha_{n+1}\in\extR$ such that, for all $x\in \mathbb{R}^N$, $\underline{g}(x)=m(x_1,\hdots{},x_n,\alpha_1,\hdots{},\alpha_{n+1})$, where $m$ is the function that selects the median element of $\{x_1,\hdots{},x_n,\alpha_1,\hdots{},\alpha_{n+1}\}$.
Similarly, we can show that there exist $n+1$ real numbers $\beta_1,\hdots{},\beta_{n+1}\in\extR$ such that, for all $x\in \mathbb{R}^N$, $\overline{g}(x)=m(x_1,\hdots{},x_n,\beta_1,\hdots{},\beta_{n+1})$.

To finish the proof, it is sufficient to show that there is a permutation $\pi$ over $\{1,\hdots{},n+1\}$ such that $\alpha_i<\beta_{\pi{(i)}}$ for all $i\in\{1,\hdots{},n+1\}$; in such a case we can define $P_i = (\alpha_i,\beta_{\pi{(i)}})$. Suppose by means of contradiction that no such permutation $\pi$ exists. Then we can order the $\alpha$s so that $\alpha_{(1)}<\cdots{}<\alpha_{(n+1)}$, and let $k\leq n+1$ such that $|\{i:\beta_i\leq\alpha_{(k)}\}|\geq k$. Let $c=n+1-k$. Let $S\in\Sigma^N$ such that for $i=1,\hdots{},c$, $\inf S_i > \alpha_{(k)}$, and for $i=c+1,\hdots{},n+1$, $\sup S_i < \alpha{(k)}$. Then $\inf f(S) = \underline{g}(\{\inf S_i\})=m(\{\inf S_i\},\alpha_{(1)},\hdots{},\alpha_{(n+1)})=\alpha_{(k)} > m(\{\sup S_i\},\beta_{(1)},\hdots{},\beta_{(n+1)})=\sup f(S)$, a contradiction that proves the claim.
\end{proof}

\begin{proof}[Proof of Theorem~\ref{thm:tisp}]
That endpoint rules satisfy the axioms follows from Theorem~\ref{mainthm}, Claim~\ref{claim1}, and the fact that neutrality implies translation equivariance.

Let $f$ satisfy the three axioms.
That $f$ satisfies anonymity and strategyproofness implies, by Claim~\ref{claim1}, that there exists $P\in\extSigma^{n+1}$ such that $f(S) = \med(S\circ P)$ for all $S\in\Sigma^N$.
Let $P\in\extSigma^{n+1}$ such that $f(S) = \med(S\circ P)$ for all $S\in\Sigma^N$.
Let $y,z\in\mathbb{R}$ such, for all $x\in\mathbb{R} \cap (\cup_{i\leq n+1} \{\inf P_i,\sup P_i\})$, $y\leq x \leq z$.
Let $u=|\{P_i : P+i = (-\infty,-\infty)\}|$, let $v=|\{P_i : P+i = (-\infty,\infty)\}|$, and let let $w=|\{P_i : P+i = (\infty,\infty)\}|$.
Note that $u,w\geq 1$, otherwise there is no $f$ for which such a $P$ exists.
I show that $u+v+w=n+1$.

Let $S\in\Sigma^N$ such that for all $i,j\in N$, $i\neq j$, $\inf S_i \neq \inf S_j$ and   $\sup S_i \neq \sup S_j$.

Let $c = z+1 - \inf_i{\in N} \inf S_i$. 
Then for all $S_i$, $\inf S_i + c > z$.
This implies that $f([S + c])=\med(S\circ P)=f^{w,n+1-w-v}([S + c])$.
By translation equivariance $f([S + c])=[f(S) + c]$
Also, by translation equivariance, $f^{w,n+1-w-v}([S + c]) = [f^{w,n+1-w-v}(S) + c]$.
Together this implies that $f(S)=f^{w,n+1-w-v}(S)$.

Next, let $d = y+1 - \sup_i{\in N} \sup S_i$. 
Then for all $S_i$, $\sup S_i+d < y$.
This implies that $f([S+d])=\med(S\circ P)=f^{n+1-u-v,u}([S+d])$.
By translation equivariance $f([S+d])=[f(S)+d]$
Also, by translation equivariance, $f^{n+1-u-v,u}([S+d]) = [f^{n+1-u-v,u}(S)+d]$.
Together this implies that $f(S)=f^{n+1-u-v,u}(S)$.

It follows that $f(S)=f^{w,n+1-w-v}(S)=f^{n+1-u-v,u}(S)$,
Thus $w=n+1-u-v$, and therefore $n+1=u+v+w$.
\end{proof}

\bibliography{ReasonableMan}
\bibliographystyle{ecta}

\end{document}